\documentclass{article}
\usepackage{amsmath}
\usepackage{amsfonts,amssymb}
\usepackage{graphics}
\usepackage{amsthm}

\usepackage{color}

\newtheorem{theorem}{Theorem}

\newtheorem{lemma}{Lemma}

\newtheorem{Remark}{Remark}

\def\cosh{\mbox{{\rm cosh}}}
\def\exp{\mbox{{\rm exp}}}
\def\div{\mbox{{\rm div}}}

\def\q{{\bf q}}
\def\p{{\bf p}}
\def\u{{\bf u}}

\def\R{\mathbb{R}}

\def\eps{\epsilon}
\def\ln{\mbox{{\rm ln}}}
\def\log{\mbox{{\rm ln}}}
\def\#{{{\cal D}_h}}
\def\equis{{\cal X}}

\title{On the relativistic BGK-Boltzmann model: asymptotics and hydrodynamics}

\author{A. Bellouquid\footnote{University Cadi Ayyad.
Ecole Nationale des Sciences Appliqu\'ees, Safi, Maroc. bellouquid@gmail.com}, J. Calvo\footnote{Dept. de Tecnologies de la Informaci\'o i les Comunicacions, Universitat Pompeu Fabra, Barcelona, Spain. juan.calvo@upf.edu}, 
J. Nieto, J. Soler\footnote{Dept. de Matem\'atica Aplicada, Universidad de Granada, Spain. jmnieto@ugr.es, jsoler@ugr.es} }

\begin{document}

 \maketitle

\begin{abstract}
The generalization of the BGK relaxation model to the special relativity setting is revisited here. We deal
with several issues related to this relativistic kinetic model which seem to have been overlooked in the previous physical literature, including the unique
determination of associated physical parameters,  classical, ultra-relativistic and hydrodynamical limits,
maximum entropy principles and the analysis of the linearized operator.
\end{abstract}

\section{Introduction}

Our aim is to give a mathematical description of a  gas in certain relaxation regimes. This will be done in terms of the relativistic BGK equation.
In that framework, such gases are regarded as consisting of many
microscopic structureless particles; our description will be given in terms of the relativistic
kinetic phase density. This  object allows us to compute tensorial moments giving the local macroscopic
physical quantities of the gas (particle density, pressure, etc).  We describe the construction of the relativistic BGK system, analyze
the hyperbolic hydrodynamical limits towards the relativistic Euler equations and study the well-posedness of the relativistic  linearized BGK
system, together with classical and ultra-relativistic limits.

The models of kinetic theory describe the time evolution of a collection of particles.
The main focus of kinetic theory has been on classical particles  and on the
special relativistic framework, but also there are
mathematical advantages in using kinetic models in general relativity instead of
their hydrodynamic counterparts, especially from the point of view of the ensuing singularities \cite{And,Crlibro,G}.  A few years before Einstein
established the principles of general relativity in 1916, J\"uttner \cite{juttner-primero} gave in 1911 the first step in extending the kinetic theory of gases to the
relativistic context, proposing a generalization of the Maxwellian distribution function which is widely accepted nowadays. The next step was due
to Walker \cite{walker} who derived the relativistic evolution equation in the absence of molecular collisions in 1935. Lichnerowicz and Marrot
\cite{liche} provide a first complete relativistic generalization of the Boltzmann equation (which was proposed in 1872) in 1940.  The analysis
of the Cauchy problem in relativistic kinetic theory was first addressed by Choquet--Bruhat
\cite{cho,cho-marsden,cho2}, and subsequently in a
wide literature, see for example \cite{glassey,Glas95}. The analysis of the linearized relativistic Boltzmann equation was first made in \cite{dud},
see also \cite{Drange}. The Cauchy problem for the relativistic Boltzmann equation has been recently revisited by Strain \cite{Strain}; he shows
that an initial datum starting close enough to a global equilibrium launches a unique global solution which decays with any polynomial rate towards that global equilibrium.

Some hydrodynamic extensions  of relativistic models have been obtained either using the Chapman-Enskog method or the moment method of
Maxwell and Grad  \cite{grad}, see Israel \cite{israel,israel-stewart}, De Groot {\it et al}  \cite{deGroot}, Marle \cite{Marle3} Degond {\it et al}
\cite{degond} and the references therein. We highlight here that the viscous (parabolic) limits frequently lead to an infinite wavefront
speed for the transport processes, which bear some inconsistency in a relativistic context, where the propagation speed cannot overcome that of
light \cite{Cercignani}. The hyperbolic asymptotic limits have the advantage of eliminating this spurious infinite velocity of the wave front
propagation and that makes them in general more suitable than parabolic limits in relativistic dynamics. In \cite{speck-strain} the local-in-time
hydrodynamic limit of the relativistic Boltzmann equation is analyzed using a Hilbert expansion and following the ideas developed in the classical
case \cite{ca1,ca2}. As a result, local solutions to the relativistic Boltzmann equation near the local relativistic Maxwellian are constructed, using
a class of solutions to
the relativistic Euler equations that show up in the hydrodynamic limit.  In this hydrodynamical context, one of the goals of our work will be to deduce the hyperbolic asymptotic in
terms of the different parameters involved in the relativistic BGK equation.

The BGK (Bhatnagar, Gross and Krook) model \cite{BGK},
proposed in 1954 for classical particles, became the most important model to solve the integro-differen\-tial
Boltzmann equation. The non-linear
quadratic (binary) collision term of the Boltzmann equation is replaced in the BGK model by a seemingly simpler term (however the nonlinearity
becomes exponential), which
makes the derivation of the transport equations for macroscopic variables easier. A problem which can be naturally addressed using the BGK
model is that of the relaxation of a state
of a fluid to equilibrium.

Our description --which is essentially based on the Marle model \cite{Marle1,Marle2}-- will come in terms of a relativistic extension of the hitherto known as
BGK model
$$
\frac{\partial f}{\partial t} + v \cdot \nabla_x f=
\nu (M_f-f),
$$
being $M_f$ a local Maxwellian implicitly defined by the requirement of having the same moments as the distribution function $f$, which
depends on time $t$, position $x$ and velocity $v$. From the physical point of view, the density of particles is assumed to
converge to an equilibrium represented by a Maxwellian function of the
velocity $v$ when the time $t$ becomes large. During the last 20 years, the BGK model has also found an
important application: the derivation of numerical schemes, namely kinetic schemes to
solve hyperbolic conservation laws, see \cite{PerthameB}. Some approaches to the relativistic BGK models have been proposed in the literature,
see for example \cite{brey,CercignaniKremer,Majorana,Marle1,Marle2} and the references therein.
It is also remarkable the applicability of this type of models in gravitation in order to analyze dispersion properties or the stability of special configurations such as galaxies. In fact, the BGK-equilibria (or BGK waves in 1D) have been analyzed for plasmas and gravitation  as stationary solutions of coupled systems of type  Vlasov, Vlasov-Boltzmann, Einstein-Vlasov, Landau-Boltzmann, Boltzmann-Maxwell, \ldots, see \cite{AW,And,CercignaniKremer,CH,CSS,CCSS,GA1,GA2,Hay,Ste} an the references therein.

The contents of the paper are as follows:
Physically meaningful quantities and local equilibria are introduced in the first Section. %
A generalization of the classical BGK model is introduced then in Section \ref{BtoBGK}, whose behavior in the non-
relativistic limit we analyze in Section \ref{ClassicalLimit}. Then we elaborate on the conservation laws of the model and how do they give rise to the relativistic and ultra-
relativistic Euler equations in some limiting regimes, being this the content of Sections \ref{relEuler} and \ref{UltrarelEuler}. The remaining
Sections \ref{displaylinearization} and \ref{existencelinearization} proceed with the analysis of the linearization of the
relativistic BGK operator, culminating with a global existence proof for the linearized BGK model. Several computations concerning moments of J\"uttner distributions and other technicalities are displayed in an Appendix.


\subsection{Some notes about special relativity}
The fundamental physical theory involved in this description that we want to set is special relativity. Just before detailing its precise role we will
introduce the related notations and conventions that we will follow during the exposition. Some background on special relativity and relativistic
kinetic theory can be found for instance in \cite{And,CercignaniKremer,deGroot,Landau,Liboff}.

The space-time coordinates in the four-dimensional Minkowsky's space are $x^\mu$, $\mu = 0, 1, 2, 3$, with
$x^0 =ct$ for the time and $x^1, x^2, x^3 $ for the position; here $c$ is the speed of light. The metric tensor $g_{\mu \nu}$ and its inverse $ g^{\mu \nu}$ are given by
$$
g_{\mu \nu}= g^{\mu \nu}=1 \  if \quad \mu=\nu =0, \quad -1 \quad if \quad \mu=\nu=
1,2,3 \quad and  \quad 0 \quad if \quad \nu \neq \mu.
$$
Our greek indices run from $0$ to $3$ and our latin (spatial) indices do from $1$ to $3$.
We will use Einstein's summation convention,
meaning that any index that appears twice in an expression,
is understood to be summed over its whole range. With the aid of the metric tensor we can  perform the operations of \emph{raising} and \emph{lowering} indices. That is, for any four-dimensional vector $u$ (four-vector hereafter),
$$
  g_{\alpha \nu}u^\nu = u_\alpha \quad \mbox{and}\quad g^{\alpha \nu}u_\nu = u^\alpha.
$$
This works in the same way for general tensor objects.

We will also consider vectors in the Euclidean three-dimensional space, which we will always denote by bold characters. Then, notations like $|
\u| = \sqrt{\u_i \u^i}$ and $\u \cdot \q = \u_i \q^i$ stand for the euclidean norm and scalar product respectively.


\subsection{Microscopic and macroscopic quantities}
We are going to describe the microscopic state of a relativistic gas by means of kinetic theory. For that, let us introduce the relativistic phase
density $f(t,x,\q) \ge 0$, which represents the density of particles with given spacetime position $(t,x)$ and momentum $\q \in \R^3$. We will
consider that all the gas particles
 have the same mass $m$.  Then
 the energy-momentum four-vector is defined as
$$
q^\mu = (q^0, c \q), \quad q^0 := c \sqrt{(mc)^2 + |\q|^2}.
$$
Note that $q^\mu$ has energy dimensions, whereas $\q$ has momentum units.

Next we can define some macroscopic moments and the entropy four-vector.  All these make sense at a given point $x^\mu$ as soon as $f(t,x,\cdot)\ge 0$ is not identically zero. The fact that the proper volume element $d\q /q^0$ is invariant
with respect to
Lorentz transformations (i.e. isometries of the Minkowsky space) \cite{CercignaniKremer} is a key physical feature of these definitions.
\begin{enumerate}
\item Particle-density four-vector
\begin{equation}
\nonumber
N^\mu(t,x)= \int_{\R^3} q^\mu f(t,x,\q)
\frac{d\q}{q^0},
\end{equation}
\item Energy-momentum tensor
\begin{equation}
\nonumber
T^{\mu \nu}(t,x)= \frac{1}{m} \int_{\R^3} q^\mu q^\nu f(t,x,\q)
\frac{d\q}{q^0},
\end{equation}
\item Entropy four-vector
\begin{equation}
\nonumber
S^\mu(t,x)= -\frac{k_B}{m} \int_{\R^3} q^\mu f(t,x,\q) \ln \left(\frac{f(t,x,\q)}{\eta}\right)
\frac{d\q}{q^0},
\end{equation}
being $k_B$ the Boltzmann's constant
 and $\eta= m/ \hbar^3$,
with $\hbar$ the Planck constant.
\end{enumerate}

\medskip
We may use the macroscopic moments $N^\mu, T^{\mu \nu}$ of
the relativistic phase density $f$ in order to compute other
macroscopic quantities of the gas, namely
\begin{enumerate}
\item The proper particle density $n_f$
\begin{equation}
\label{nefe}
n_f=\sqrt{N^\mu N_\mu}.
\end{equation}
\item The dimensionless velocity four-vector $u_f$
\begin{equation}
\label{uefe}
n_fu_f^\mu = N^\mu.
\end{equation}
Note that $u_f^\mu (u_f)_\mu = 1$ and then $u^\mu =(\sqrt{1+|\u|^2},\u)$. Thus for $\mu =0$,  we can deduce the relation
\begin{equation}
\label{rel1}
n_f \sqrt{1+|\u_f|^2}= \int_{\R^3} f(t,x,\q) d\q.
\end{equation}
\item The proper energy density $e_f$
\begin{equation}
\nonumber
 e_f = (u_f)_\mu (u_f)_\nu  T^{\mu \nu}.
\end{equation}
\item The proper pressure $p_f$
\begin{equation}
\label{temperatura}
 p_f = \frac13 ((u_f)_\mu (u_f)_\nu  - g_{\mu \nu})T^{\mu \nu}.
\end{equation}

\item The proper entropy density $\sigma_f$
\begin{equation}
\nonumber
 \sigma_{f}= S^\mu (u_f)_\mu.
\end{equation}

\end{enumerate}


\subsection{The J\"uttner equilibrium}
\label{equilibrio}
The generalization of the classical global Max\-wellian to this setting is the so-called J\"uttner equilibrium (or relativistic Max\-we\-llian) \cite{juttner-primero,Juttner}. The J\"uttner distribution can be regarded as a function $J(n,\beta,\u;\q)$ describing the state of
a gas in equilibrium, depending on five parameters: $n\ge 0$, $\beta > 0$ and $\u \in \R^3$. It is given by the formula
\begin{equation}
\nonumber
 J(n,\beta,\u;\q) = \frac{n}{(mc)^3 M(\beta)} \exp \left\{- \frac{\beta}{m c^2}
u_\mu q^\muÊ\right\} \end{equation} or equivalently
$$
J(n,\beta,\u;\q) = \frac{n}{(mc)^3 M(\beta)} \exp \left\{- \frac{\beta}{m c} \left( \sqrt{1+|\u|^2}\sqrt{(m c)^2 + |\q|^2} -  \u \cdot \q \right)Ê\right\}.
$$
Since $J(n,\beta,\u;\q)$ is thought of as an equilibrium distribution, then $n$ is interpreted as its particle density, $\u $ as the spatial part of
the four-velocity $u$ (and as such $u_\mu u^\mu = 1$) and $m c^2 /(k_B \beta)$ as the equilibrium temperature. Note also that $\beta$ is
dimensionless as $ M(\beta)$ is, which takes the form
\begin{eqnarray}
\label{eme}
 M(\beta)&=& \frac{1}{(mc)^3}\int_{\R^3}\exp\left\{ -\frac{\beta}{m c} \sqrt{(mc)^2+|\q|^2}\right\}d\q \nonumber \\
& =&\int_{\R^3}\exp\left\{ -\beta \sqrt{1+|\p|^2}\right\}d
\p.
\end{eqnarray}
In such a way, we have that
\begin{equation}
  n u^\mu = \int_{\R^3} q^\mu J(n,\beta,\u;\q) \frac{d\q}{q^0}.
  \label{nuJ}
\end{equation}
 As regards the size of the exponent in the above formulae, we point out that
\begin{equation}
\label{scalaruq}
u_\mu q^\mu \ge m c^2
\end{equation}
which is an straightforward consequence of the Cauchy--Swartz inequality for timelike vectors in
Minkowski space.

\section{From relativistic Boltzmann to relativistic BGK}
\label{BtoBGK}
The definitions from the previous section apply to any relativistic phase density $f=f(t,x,\q)$, which, in case that no forces are involved, should obey a kinetic equation of the following
form
\begin{equation}
\label{paco}
q^\mu \frac{\partial f}{\partial x^\mu}= C(f).
\end{equation}
We have a transport part on the left-hand side and a collision part
$C(f)$ on the right-hand side, as in the non-relativistic setting. In the simplest case
particles interact only by means of elastic collisions --no other forces are assumed to play a role--. Then
 $C(f)$ should be determined in such a way that the conservation laws
for particle number, energy and momentum hold:
\begin{equation}
\label{conservationlaws}
 \frac{\partial N^\mu }{ \partial x^\mu}=0, \quad \frac{\partial T^{\mu \nu} }{ \partial
x^\nu}=0.
\end{equation}

\noindent The relativistic Boltzmann equation is a paramount example
of such kinetic models. It arises as a particular case of \eqref{paco} with $C(f) = Q(f,f)$,
which stands for a bilinear operator in $f$, verifying that $Q(J,J)=0$.
The detailed description of the relativistic collision operator $Q(f,f)$  is inessential for our present purposes; we refer the reader to
\cite{And,CercignaniKremer,deGroot} for more information of the subject. We just point out that the relativistic extension of the BGK model and
related descriptions --see \cite{Majorana} for instance-- arise as model equations for the relativistic Boltzmann description.


\subsection{The BGK Model}

Now we are going to construct a BGK-type model which satisfies the
conservation laws (\ref{conservationlaws}). In this way we will recover a model previously posed by Marle \cite{Marle1,Marle2}. Making a parallel with the relativistic Boltzmann equation and the classical BGK model, we want to build an equation for $f$ of the following form
\begin{equation}
\label{relBGK}
\partial_t f+c\hat{q}\cdot\nabla_x f= \frac{mc^2\omega}{q^0
} (J_f - f):= \frac{c\, Q_{BGK}^R(f)}{q^0
},
\end{equation}
the right hand side being a relaxation operator constructed from a
local relativistic Maxwellian $J_f$, which we will detail below. Here  $\omega$ denotes the collision frequency and $\hat{q}= \q/\sqrt{(mc)^2+|\q|^2}$.
Note that such a model has also the form \eqref{paco}, with $C(f) =Q_{BGK}^R(f)= mc\omega (J_f - f)$.

We define $J_f$ implicitly as a local J\"uttner distribution $J(n,\beta,\u;\q) $
whose parameters $n, \u, \beta$ will be given in the next Lemma \ref{Bellouq} through some integrals of $f$ in an unique way.
In order to do that, we impose that  functions verifying \eqref{relBGK} should also satisfy the conservation laws \eqref{conservationlaws}. That requirement turns up to be
(via multiplication by $1$ and $q^\mu/m$ and integration in \eqref{relBGK}) equivalent to
\begin{equation}
\label{veintinueve}
\int_{\R^3}  J_f \frac{d\q}{q^0
} =\int_{\R^3}  f \frac{d\q}{q^0},
\end{equation}
\begin{equation}
\label{veintiocho}
\int_{\R^3} q^\mu J_f \frac{d\q}{ q^0}= \int_{\R^3} q^\mu f \frac{d\q}{q^0}.
\end{equation}
Actually, let us see that these five conditions allow us to determine the five parameters of the J\"utner equilibrium
and to construct the relaxation operator.
\begin{lemma}
\label{Bellouq}
 Let $f=f(\q) \geq 0$, for $\q\in
\R^3$,  a non-zero phase density (in the almost everywhere sense)
for which the moments $N^\mu, T^{\mu \nu}$ exist. Then, we can find a unique set of quantities $n$, $\u$,  $\beta$ and a function $ J_f:=J(n,
\beta,\u;\q) $ such that \eqref{veintinueve} and \eqref{veintiocho}  are fulfilled. Moreover, for this choice of $J_f$, a relativistic phase density $f$
verifying \eqref{relBGK} fulfills the conservation laws \eqref{conservationlaws}.
\end{lemma}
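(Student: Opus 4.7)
The plan is to decouple the five conditions: the four-vector equation (\ref{veintiocho}) will determine $(n,\u)$ from the macroscopic data of $f$ alone, and then (\ref{veintinueve}) reduces to a single scalar equation in $\beta$ to be solved by a monotonicity argument.

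First I would check that $N^\mu$ is future-directed timelike, so that (\ref{nefe})--(\ref{uefe}) supply a strictly positive $n_f$ and a unit future-directed $u_f^\mu$; this relies on $q^\mu$ lying in the forward mass shell ($q^\mu q_\mu=m^2c^4$, $q^0>0$) together with the reverse Cauchy--Schwarz estimate $q^\mu(\q)q_\mu(\q')\ge m^2c^4$ for two forward timelike vectors, which integrated against $f(\q)f(\q')$ gives $N^\mu N_\mu\ge (mc^2)^2\bigl(\int f\,d\q/q^0\bigr)^2>0$. By identity (\ref{nuJ}), condition (\ref{veintiocho}) becomes $n u^\mu=N^\mu$, forcing the unique choice $n=n_f$, $u^\mu=u_f^\mu$ and hence fixing $\u$ as the spatial part of $u_f^\mu$; this determination is manifestly independent of $\beta$.

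With $(n,\u)$ fixed, I would compute the left-hand side of (\ref{veintinueve}) by exploiting that both $d\q/q^0$ and $u_\mu q^\mu$ are Lorentz invariant: evaluating in the rest frame where $\u=0$ and rescaling $\p=\q/(mc)$ yields
\begin{equation*}
\int_{\R^3} J_f\,\frac{d\q}{q^0}=\frac{n}{mc^2}\,\phi(\beta), \qquad \phi(\beta):=\frac{\tilde M(\beta)}{M(\beta)},
\end{equation*}
with $M$ as in (\ref{eme}) and $\tilde M(\beta):=\int_{\R^3}(1+|\p|^2)^{-1/2}e^{-\beta\sqrt{1+|\p|^2}}\,d\p$. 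Thus (\ref{veintinueve}) becomes $\phi(\beta)=\tfrac{mc^2}{n_f}\int f\,d\q/q^0$. The crux, and main obstacle, is to show that $\phi:(0,\infty)\to(0,1)$ is a strictly increasing bijection. Using $\tilde M'(\beta)=-M(\beta)$ one finds $\phi'=[\tilde M(-M')-M^2]/M^2$, and the Cauchy--Schwarz inequality $M(\beta)^2\le\tilde M(\beta)(-M'(\beta))$, applied to the factorization $e^{-\beta\sqrt{1+|\p|^2}}=(1+|\p|^2)^{-1/4}e^{-\beta\sqrt{1+|\p|^2}/2}\cdot(1+|\p|^2)^{1/4}e^{-\beta\sqrt{1+|\p|^2}/2}$, is strict since these two factors are not proportional, so $\phi'>0$. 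The limits $\phi(0^+)=0$ and $\phi(+\infty)=1$ follow from dominant balance at large $|\p|$ and Laplace's method near $\p=0$, respectively.

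Finally, the target $\tfrac{mc^2}{n_f}\int f\,d\q/q^0$ lies in $(0,1)$: positivity uses $f\not\equiv 0$, and the strict upper bound follows by integrating (\ref{scalaruq}) against $f\,d\q/q^0$, since $n_f=N^\mu (u_f)_\mu=\int ((u_f)_\mu q^\mu)\,f\,d\q/q^0>mc^2\int f\,d\q/q^0$, with strictness because equality in (\ref{scalaruq}) holds only at the single point $\q=mc\u_f$. Hence $\beta$ is uniquely determined, completing existence and uniqueness. The second assertion is then immediate: writing (\ref{relBGK}) in the form (\ref{paco}) and testing against $d\q/q^0$ and $q^\nu\,d\q/(m q^0)$, conditions (\ref{veintinueve})--(\ref{veintiocho}) annihilate the right-hand sides, yielding (\ref{conservationlaws}).
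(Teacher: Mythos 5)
Your proposal is correct, and its skeleton coincides with the paper's: fix $(n,\u)=(n_f,\u_f)$ so that \eqref{veintiocho} holds by \eqref{nuJ}, then reduce \eqref{veintinueve} to the scalar equation \eqref{betadef} and solve it by monotonicity of $K_1/K_2$ together with the bound $0<\tfrac{mc^2}{n_f}\int f\,d\q/q^0<1$. Where you genuinely diverge is in how the two technical ingredients are established. For the monotonicity, the paper simply cites its Appendix (Section \ref{mono}), which runs through Bessel recurrences and explicit polynomial bounds split into the cases $\beta<2$ and $\beta\ge 2$; your Cauchy--Schwarz argument, based on $\tilde M'=-M$ and the strict inequality $M^2<\tilde M\,(-M')$ from the factorization $e^{-\beta\sqrt{1+|\p|^2}}=(1+|\p|^2)^{-1/4}e^{-\beta\sqrt{1+|\p|^2}/2}\cdot(1+|\p|^2)^{1/4}e^{-\beta\sqrt{1+|\p|^2}/2}$, is shorter, self-contained and gives strict monotonicity on all of $(0,\infty)$ in one stroke (it is in essence a log-convexity/variance argument, and one should just confirm the endpoint limits $\phi(0^+)=0$, $\phi(+\infty)=1$ with the standard Bessel asymptotics rather than the informal ``dominant balance'' phrasing). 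For the bound on the right-hand side, the paper boosts to the rest frame via Lemma \ref{lorentz} and uses $q^0\ge mc^2$, while you stay covariant and integrate \eqref{scalaruq} against $f\,d\q/q^0$; the two are equivalent, but your version makes explicit the strictness of the upper bound (equality in \eqref{scalaruq} only at $\q=mc\,\u_f$, a null set), a point the paper leaves implicit even though it is needed since $1$ is not in the range of $K_1/K_2$. You also add the preliminary check that $N^\mu$ is future timelike so that $n_f>0$ and $u_f$ are well defined, which the paper takes for granted. Altogether your write-up is a valid and somewhat more self-contained proof of Lemma \ref{Bellouq}.
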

\begin{proof}
We first note that taking $n=n_f$ and $\u = \u_f$, \eqref{veintiocho} is a direct consequence of  \eqref{uefe} and \eqref{nuJ}.

It only remains to determine the parameter $\beta$ in order that (\ref{veintinueve}) holds. Following Lemma \ref{lema3}, condition
(\ref{veintinueve}) is equivalent to the equation
\begin{equation}
\label{betadef}
   \frac{K_1(\beta)}{K_2(\beta)} = \frac{ mc^2}{n_f} \displaystyle \int_{\R^3}  f \frac{d\q}{q^0
},
\end{equation}
The function on the left hand side is known to be increasing and with range $[0,1)$, see Section \ref{mono} in the Appendix and \cite{librotablas}.
It only remains to prove that the right hand side is in the same range. In order to do that we use the invariance of $n_f$ and $\frac{d\q}{q^0}$
under Lorentz transformations (see Lemma  \ref{lorentz} in the Appendix)  to write
\[
\frac{\displaystyle \int_{\R^3}  f \frac{d\q}{q^0}}{n_f} =\frac{\displaystyle \int_{\R^3}  f_\Lambda \frac{d\q}{q^0
}}{\displaystyle \int_{\R^3}  f_{\Lambda} \ d\q},
\]
where we have considered the  Lorentz boost $\Lambda$ such that $u_{f_\Lambda} = (1,0,0,0)$. We can conclude by noticing that $q^0=c
\sqrt{(mc)^2+|\q|^2} \geq mc^2$.
\end{proof}

For future reference we denote $\beta_f$ the unique parameter $\beta$ verifying equation \eqref{betadef}. We also remark here (see Lemma \ref{lorentz} in the Appendix) that
this quantity is Lorentz invariant.
So finally the BGK-type model that we propose is given by (\ref{relBGK}) together with
\[
J_f=J(n_f,\beta_f,\u_f;\q) .
\]
The well-posedness of (\ref{relBGK}) remains as an open problem (compare with
\cite{Perthame,Bellouquid1}); see however
\cite{Perthameperdido} for the 2-dimensional case.

\subsection{Conservation laws}
Under no external forces and no radiation, any matter model should  obey the constitutive relations \eqref {conservationlaws}. For our model,  these conservation laws can be obtained from \eqref{relBGK} upon  multiplication by $ (1,q^\mu/m) $ and integration in $\R^3$  against $\frac{d\q}{q^0
}$, by noticing that Lemma \ref{Bellouq} allow us to  ensure that the  corresponding integrals of the rhs are zero. Then, the conservation laws read: first,
\begin{equation}
\label{Mass}
\frac{\partial}{\partial t} \int_{\R^3} f d\q \ +c\,  \div_x \int_ {\R^3} \frac{\q f}{\sqrt{(mc)^2 + |\q|^2}}\ d\q = 0,
\end{equation}
as the equation for $N^\mu$ (conservation of the number of particles) which, using \eqref{uefe} and \eqref{rel1}, leads to
\begin{equation*}
\label{Mass1}
   \frac{\partial}{\partial t} \left( n_f \sqrt{1+\u_f^2} \right)  \ +c\,  \div_x (n_f \u_f) = 0.
\end{equation*}
Second, the equation for $T^{0\mu}$ (conservation of energy --zeroth component  of the four--momentum) takes the form
\begin{equation}
\label{energy}
   \frac{1}{m} \frac{\partial}{\partial t} \int_{\R^3} \sqrt{(mc)^2  + |\q|^2} f\ d\q \ + \frac{c}{m}\div_x \int_{\R^3} \q f d\q =
   0.
\end{equation}
And third, the equations for $T^{i\mu}$, which represents the conservation of the spatial part of the  four--momentum, is
\begin{equation}
\label{momentum}
   \frac{c}{m} \frac{\partial}{\partial t} \int_{\R^3} \q^i f d\q \  + \frac{c^2}{m} \frac{\partial }{\partial x_j} \int_{\R^3}
    \frac{\q^i \q^j f}{\sqrt{(mc)^2 + |\q|^2}}\ d\q = 0.
\end{equation}

\subsection{ The H-Theorem and the entropy}
\label{htheorem}

An important property of the BGK model (\ref{relBGK}) is the $H$- theorem,
which gives the local dissipation law for the functional $\int_{\R^3}  f \ln(f)\
d\q$, namely, the following equation:
\begin{equation}
\label{entropy21}
  \frac{k_B}{m} \partial_t \int_{\R^3} f \ln(f/\eta)\
d\q + \frac{c k_B}{m}\div_x \int_{\R^3} \q f \ln(f/\eta)\frac{d\q}{ q^0
} \leq 0.
 \end{equation}
 In order to achieve the $H$-Theorem, it suffices to show
the dissipation property
\begin{equation}
\label{entropydiss}
\frac{k_B}{m} \int_{\R^3} \ln(f/\eta) Q_{BGK}^R(f) \frac{d\q}{
q^0
} \leq 0.
 \end{equation}
 This is already proved in \cite[Chapter 8, eq.  (8.8)]{CercignaniKremer}.

Let us see next that the local J\"uttner equilibria arising in our  framework are determined by an extremality principle that is akin to  that of maximum entropy.

\begin{lemma}[Minimum free energy principle]
\label{maximumPrinciple}
Let $f(\q)\ge 0$ be given and let $J_f$ be the  associated J\"uttner equilibrium constructed in Lemma \ref {Bellouq}. Then,  the following inequality
$$
\left(\sigma-\frac{k_B \beta}{(mc)^2} e\right)_{J_f} - \left(\sigma- \frac{k_B \beta}{(mc)^2} e\right)_f \ge 0
$$
holds.
\end{lemma}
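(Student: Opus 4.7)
The plan is to treat this as a Gibbs-type (relative entropy) inequality adapted to the relativistic setting, where the role played in the classical argument by mass matching is taken over here by the moment-matching identities \eqref{veintinueve}--\eqref{veintiocho} of Lemma~\ref{Bellouq}. Since by construction $J_f$ and $f$ share the same four-velocity $u_f$, one has
$$
\sigma_g = -\frac{k_B}{m}\int_{\R^3}(u_f)_\mu q^\mu\, g\ln(g/\eta)\,\frac{d\q}{q^0},\qquad e_g = \frac{1}{m}\int_{\R^3}\bigl((u_f)_\mu q^\mu\bigr)^2 g\,\frac{d\q}{q^0},
$$
for both $g=f$ and $g=J_f$.

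The key pointwise tool is the convexity of $h(x)=x\ln x$, giving
$$
f\ln(f/\eta) - J_f\ln(J_f/\eta) \ge \bigl(1 + \ln(J_f/\eta)\bigr)(f - J_f)\qquad\text{on }\R^3.
$$
From the explicit form of the J\"uttner equilibrium,
$$
1+\ln(J_f/\eta) = \kappa_f - \frac{\beta_f}{mc^2}\,(u_f)_\mu q^\mu,
$$
where $\kappa_f$ depends only on $n_f$, $\beta_f$ and universal constants. I would then multiply the pointwise inequality by the nonnegative weight $\tfrac{k_B\,(u_f)_\mu q^\mu}{m\,q^0}$ (nonnegativity by \eqref{scalaruq}) and integrate in $\q$. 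The left-hand side collapses exactly to $\sigma_{J_f}-\sigma_f$; on the right the $\kappa_f$ contribution is a constant multiple of $\int (u_f)_\mu q^\mu (f-J_f)\frac{d\q}{q^0}$, which vanishes upon contracting \eqref{veintiocho} with $(u_f)_\mu$, while the remaining term yields a multiple of $\int\bigl((u_f)_\mu q^\mu\bigr)^2 (J_f-f)\frac{d\q}{q^0}$, i.e. a multiple of $e_{J_f}-e_f$.

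Collecting terms, the resulting inequality rearranges into the shape $\sigma_{J_f}-\sigma_f \ge \frac{k_B\beta_f}{(mc)^2}(e_{J_f}-e_f)$, which is exactly the claim. The main obstacle is purely bookkeeping: reconciling the factors of $m$, $c$ and $\beta_f$ coming from three independent sources (the prefactor in $T^{\mu\nu}$, the exponent of $J_f$, and the normalization of $S^\mu$) so that they combine into precisely the coefficient displayed in the statement. Strict convexity of $h$ moreover forces equality if and only if $f=J_f$ almost everywhere, so the J\"uttner distribution constructed in Lemma~\ref{Bellouq} is characterized as the unique minimizer of the free-energy functional among profiles sharing its first moments with $f$.
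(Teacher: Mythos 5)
Your proposal is correct and follows essentially the same route as the paper: the convexity (tangent-line) inequality for $x\mapsto x\ln(x/\eta)$ is precisely the paper's Taylor identity with the nonnegative quadratic remainder $\tfrac{(f-J_f)^2}{2\xi}$ discarded, and the remaining steps --- weighting by the nonnegative factor $(u_f)_\mu q^\mu/q^0$ via \eqref{scalaruq}, killing the affine part through the moment matching of Lemma \ref{Bellouq}, and identifying the $\beta_f$-weighted piece with $e_{J_f}-e_f$ --- coincide with the paper's argument. The only additions are cosmetic: your strict-convexity remark on the equality case, which the paper's explicit remainder also yields.
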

\begin{proof}
The proof is similar to that of Proposition 2.1 in \cite{Kunik2004}.  Using a Taylor development, we get the identity
$$
f \log (f/\eta) = J_f \log (J_f/\eta) + (f- J_f)[\log (J_f/\eta) + 1]  + \frac{(f-J_f)^2}{2 \xi},
$$
being $\xi$ some momentum distribution that can be written as a convex combination of $f$ and $J_f$. Then, taking in to account Lemma \ref{Bellouq}, the  difference between the entropies associated with $J_f$ and $f$ can be  written as
\begin{eqnarray*}
  \sigma_{J_f} - \sigma_f  &=& \frac{k_B}{m} (\u_f)_\mu \left[ \int_ {\R^3} q^\mu \log (J_f/\eta) (f-J_f)\frac{d\q}{q^0} + \int_{\R^3} q^ \mu \frac{(f-J_f)^2}{2 \xi} \frac{d\q}{q^0} \right]
\\
 &=& \frac{k_B}{m} \log \left( \frac{n_f}{\eta (mc)^3 M(\beta_f)} \right) (\u_f)_\mu \int_{\R^3} q^\mu (f-J_f) \frac{d\q}{q^0}
\\
 &&- \frac{k_B}{m} \frac{\beta_f}{mc^2} (\u_f)_\mu (\u_f)_\nu  \int_ {\R^3} q^\mu q^\nu (f-J_f) \frac{d\q}{q^0}
 \\
 &&+ \frac{k_B}{m} (\u_f)_\mu  \int_{\R^3} q^\mu \frac{(f-J_f)^2}{2 \xi} \frac{d\q}{q^0}.
\end{eqnarray*}
Taking into account \eqref{scalaruq} we deduce that the last term  above is non-negative. The first term is already zero thanks to Lemma  \ref{Bellouq}. Next we use that $\beta_{J_f} = \beta_f$ to write
$$
 \sigma_{J_f} - \sigma_f \ge  - \frac{k_B}{m} \frac{\beta_f}{mc^2}  (\u_f)_\mu (\u_f)_\nu  \int_{\R^3} q^\mu q^\nu (f-J_f) \frac{d\q} {q^0} = -\frac{k_B}{(mc)^2} (\beta_f e_f - \beta_{J_f}e_{J_f}).
$$
Altogether the result follows.
\end{proof}


\section{Asymptotic limits of the relativistic BGK model}
\label{dimension}
One of the aim of the paper is to perform three distinguished limits of the BGK system \eqref{relBGK}: the non relativistic limit, the hydrodynamical limit and the ultra relativistic limit. In order to do that we will consider the physical meaning of the involved constants represented on a set of only four typical variables, which allow to rewrite \eqref{relBGK} in a dimensionless form for its mathematical study. We will precise later the connections between these four variables which give rise to the three  limits. Let us choose a scaling such that:
\[
\q= \bar\q \mu, \quad x= \bar x L, \quad t = \bar t \tau,
\]
where  ``$\bar{\ }$"{} stands for dimensionless numbers and $\mu$, $L$ and $\tau$ stand for the typical microscopic momentum, the typical length and the typical time, respectively.  Also, we perform the change of unknown
\[
 f(t,x,\q)  = \frac{{\cal N}}{\mu^3} \bar f (\bar t, \bar x, \bar \q),
\]
where ${\cal N}$ is the typical density. Then,  system \eqref{relBGK} becomes  now
\begin{equation}
\label{barBGK}
\frac{\partial }{\partial \bar t} \bar f+\Big( \frac{\tau \mu }{m L }\Big)  \frac{ \bar\q \cdot \nabla_{\bar x} \bar f }{\sqrt{1+ \big| \frac{\mu }{mc}  \bar \q  \big|^2}}=( \omega \tau )  \frac{1}{\sqrt{1+ \big| \frac{\mu }{mc}  \bar \q  \big|^2}}  \Big(\bar J_{\bar f} - \bar f\Big),
\end{equation}
where $\bar J_{\bar f}= \frac{\mu^3}{\cal N}  J_{f} $ must  be specified for each limit. Also, \eqref{entropy21} becomes
\begin{equation}
\label{entropyscaled}
\partial_{\bar t} \int_{\R^3} \bar f\,  \ln\Big(\frac{{\cal N} {\bar f}}{\mu^3\eta}\Big)\,
d\bar \q +\Big( \frac{\tau \mu }{m L }\Big) \div_{\bar x} \int_{\R^3} \bar \q \bar f \,  \ln\Big(\frac{{\cal N} {\bar f}}{\mu^3\eta}\Big)\, \frac{d \bar \q }{\sqrt{1+ \big| \frac{\mu }{mc}  \bar \q  \big|^2}} \leq 0.
 \end{equation}

Then, the three  limits will be attained by choosing the following relations:
\begin{equation}
\label{3escalas}
\begin{tabular}{|l|c|c|c|c|}\hline
Limit & \multicolumn{3}{|c|}{Scaling relations} & Asymptotics \\ \hline
 & & & & \\[-8 pt]
Non relativistic & $\mu =\displaystyle\frac{mL}{\tau} $ & $\displaystyle\omega \tau =\, $cst  &
 $\displaystyle\frac{\eta \mu^3}{{\cal N}}= \, $cst & $\mu < < mc$ \\[7 pt] \hline
  & & & & \\[-8 pt]
Hydrodynamic & $\mu =\displaystyle\frac{mL}{\tau} $ & $c =\displaystyle\frac{L}{\tau} $  &
 $\displaystyle\frac{\eta \mu^3}{{\cal N}}= \, $cst & $\displaystyle\frac{1}{\omega} < < \tau $ \\[7 pt] \hline
 & & & & \\[-8 pt]
Ultra relativistic &  $\omega \tau=\, $cst  & $c =\displaystyle\frac{L}{\tau} $ &
 $\displaystyle\frac{\eta \mu^3}{{\cal N}}= \, $cst & $ mc < < \mu $ \\[7 pt] \hline
\end{tabular}
\end{equation}
Also, a dimensionless relativistic BGK model will be analyzed in Sections \ref{displaylinearization} and \ref{existencelinearization} when the
nondimensional constants involved in \eqref{barBGK}--\eqref{entropyscaled} are 1, which corresponds with the choice
\begin{equation}
\label{normalscale}
\begin{tabular}{|c|c|c|c|}\hline
\multicolumn{4}{|c|}{Normalized scaling relations}  \\ \hline
$c =\displaystyle\frac{L}{\tau} $ & $\omega \tau = 1$ &
 $\displaystyle\frac{\eta \mu^3}{{\cal N}}= 1$& $\mu = mc$ \\[7 pt] \hline\end{tabular}
\end{equation}


\subsection{The non relativistic limit}
\label{ClassicalLimit}
In order to  give consistency to our approach to the  relativistic BGK model, let us recover in the non-relativistic limit the classical BGK system, see also  \cite{Calogero2004,Cercignani,Strain2} for a similar analysis in this context.

In order to perform the non relativistic limit we choose the first scaling given in \eqref{3escalas}.
It corresponds to the assumption that the typical microscopic momentum $\mu$ is small compared with $mc$, i.e., the speed of particles is slow compared with the speed of light. Then, setting $\bar \omega := \omega \tau $ and $\bar \eta :=\eta \mu^3/{\cal N}$  --assumed to be constants--  and defining $\epsilon := \mu / (mc)< < 1$, our system \eqref{barBGK} becomes  now
\begin{equation}
\label{relBGKscalled}
\partial_{\bar t}  f_{nr}+ \frac{1}{\sqrt{1+|\epsilon \bar\q |^2}} \bar\q \cdot \nabla_{\bar x}  f_{nr}= \frac{\bar \omega }{\sqrt{1+|\epsilon \bar\q |^2}}
\Big(\frac{\mu^3}{\cal N}  J_{f} -  f_{nr}\Big),
\end{equation}
where it only remains to write in a compact form the expression of the right hand side. To do that we write:
\begin{eqnarray}
\frac{1}{q^0} &=& \frac{1}{mc^2}\displaystyle\frac{1}{\sqrt{1 + \epsilon^2 |\bar\q|^2}} =\frac{1}{mc^2}\left( 1 - \epsilon^2 \displaystyle\frac{|\bar\q|^2}{2}+{\cal O}(\epsilon^4)\right), \label{q0called}
\end{eqnarray}
and then, after the change of variables $d\q = \mu^3 d\bar \q$ on the integrals, we get
\begin{eqnarray}
\hspace{-0.9 cm}  n_f &= &  \left[ \left( \int_{\R^3} f(t,x,\q) d\q \right)^2 -  \sum_{i=1}^3  \left( \int_{\R^3} c{\q}_i f(t,x,\q) \frac{d\q}{q^0
} \right)^2 \right]^{\frac12} \nonumber
\\
&=&  {\cal N}\!\left[ \left( \int_{\R^3}  f_{nr}(\bar t,\bar x,\bar\q) d\bar\q \right)^2\! \!  - \epsilon^2 \sum_{i=1}^3  \left( \int_{\R^3} \bar{\q}_i  f_{nr}(\bar t,\bar x,\bar\q) d\bar\q
\right)^2 \right]^{\frac12}\!\!\! + {\cal O}(\epsilon^4).
 \label{enescalled1}
\end{eqnarray}
Defining then the classical (scaled) density as usual,
\[
n_{nr} (\bar t, \bar x)= \int_{\R^3}  f_{nr} (\bar t,\bar x,\bar\q) d\bar{\q},
\]
we obtain the asymptotic behavior of $n_f$
\begin{equation}
n_f(t,x)=  {\cal N}  n_{nr} + {\cal O}(\epsilon).
\label{enescalled}
\end{equation}
In a similar way, we use \eqref{q0called} to   find
\[
n_f\u_f = \int_{\R^3} c \q f(t,x,\q) \frac{d\q}{q^0}=
  \frac{{\cal N} \mu}{mc} \int_{\R^3} \frac{\bar\q     f_{nr}(\bar t,\bar x,\bar\q) }{\sqrt{1 + \epsilon^2 |\bar\q|^2}} d \bar \q = \epsilon  {\cal N}\! \!
 \int_{\R^3} \bar\q     f_{nr}\,  d \bar \q + {\cal O}(\epsilon^2)
\]
which in combination with (\ref{enescalled}) allows to get
\begin{eqnarray}
\label{uscalled}
\u_f = \epsilon \left(\frac{1}{n_{nr}}  \int_{\R^3} \bar\q  f_{nr}(\bar t,\bar x,\bar\q) d\bar\q \right) + {\cal O}(\epsilon^2) := \epsilon   \u_{nr} + {\cal O}(\epsilon^2),
\end{eqnarray}
where $  \u_{nr}$ is actually the (scaled) classical velocity. In order to obtain the expansion for $\beta_f$ we first combine a Taylor expansion of $1/\sqrt{1-s^2}$ around $s=0$ together with \eqref{enescalled1} to obtain
\begin{equation}
\frac{1}{n_f} = \frac{1}{{\cal N} n_{nr}}  \Big( 1+ \frac{\epsilon^2}{2} | \u_{nr}|^2\Big)+ {\cal O}(\epsilon^4).
\label{1sobren}
\end{equation}
Second, we use \eqref{q0called} to obtain
\[
 \int_{\R^3} f(t,x,\q) \frac{d\q}{q^0}=
 \frac{ {\cal N} }{mc^2} \left( \int_{\R^3}   f_{nr}(\bar t,\bar x,\bar\q) d \bar \q   - \epsilon^2 \int_{\R^3} \frac{|\bar \q|^2}{2}  f_{nr}(\bar t,\bar x,\bar\q) d \bar \q \right)  + {\cal O}(\epsilon^4),
\]
which, combined with \eqref{1sobren} and using the definition  \eqref{betadef} of $\beta_f$, allow us to write
\[
\displaystyle\frac{K_1}{K_2}(\beta_f) =1 + \frac{\epsilon^2}{2}\left(|  \u_{nr}|^2-  \displaystyle\frac{1}{n_{nr}}\displaystyle
\int_{\R^3} |\bar\q|^2  f_{nr}(\bar t,\bar x,\bar\q) d\bar\q \right)
+  {\cal O}(\epsilon^4).
\]
In particular, we observe that $K_1(\beta_f)/K_2 (\beta_f)$ approaches 1, so equivalently $\beta_f$ is large. This is precisely why we use the following expansion  (check \cite{CercignaniKremer}), valid for $\beta_f >>1$
\begin{eqnarray*}
\displaystyle\frac{K_1}{K_2}(\beta_f) = 1 - \displaystyle\frac{3}{2\beta_{f}} + {\cal O} \left(\displaystyle\frac{e^{-\beta_{f}}}{\beta_{f}}\right).
\end{eqnarray*}
Finally, we use the two expansions of $K_1/K_2$ to obtain
\[
\frac{1}{\beta_f}= \frac{\epsilon^2}{3}\left(\frac{1}{n_{nr}}  \int_{\R^3} | \bar\q|^2   f_{nr} (\bar t,\bar x,
\bar\q) d\bar\q -  | \u_{ nr}|^2 \right)+ {\cal O} \left(\displaystyle\frac{e^{-\beta_{f}}}{\beta_{f}}\right) +  {\cal O}(\epsilon^4),
\]
and, actually, define the (scaled) temperature at equilibrium,
\begin{equation}
 T_{nr} = \frac{1}{ \beta_{nr}} := \frac{1}{3}\left(\frac{1}{n_{nr}}  \int_{\R^3} | \bar\q|^2  f_{nr}(\bar t,\bar x,
\bar\q) d\bar\q -  |  \u_{nr}|^2 \right),
\label{scaledT}
\end{equation}
which corresponds with the classical one. With this notation we can also write the expansion for $\beta_f$
\begin{equation}
{\epsilon^2 \beta_{f} =  \beta_{nr} \left(1-  {\cal O}(\epsilon^2)\right)}.
\label{betascaled}
\end{equation}

Let us now check the behavior of the J\"uttner function in the non-relativistic limit. We first
observe that (see \cite{librotablas})
\begin{equation}
\frac{1}{M(\beta_f)} = \frac{e^{\beta_f}}{\left(\frac{2\pi}{\beta_f}\right)^{3/2}} \left( 1+ {\cal O} \Big(\frac{1}{\beta_f}\Big) \right) =
 \frac{ e^{\beta_f}}{ \epsilon^3} \left(\frac{2\pi}{ \beta_{nr}}\right)^{-3/2} \Big( 1+ {\cal O} (\epsilon^2) \Big) .
\label{emescaled}
\end{equation}
On the other hand, using  \eqref{uscalled},  we can get  the following equality  for the size of the exponent in the J\"uttner function
\begin{equation}
\frac{(u_f )_\mu q^\mu}{mc^2} =    \sqrt{1+|\u_f|^2}\sqrt{1+ \epsilon^2 |\bar \q|^2} - \epsilon \u_f \cdot \bar \q =1 +\frac{ \epsilon^2}{2} |\bar\q -   \u_{nr} |^2 + {\cal O}(\epsilon^3).
\label{uqscaled}
\end{equation}
Then, putting all the expansions \eqref{enescalled}, \eqref{uscalled}, \eqref{betascaled}, \eqref{emescaled} and \eqref{uqscaled} together, we finally obtain
\[
\frac{\mu^3}{\cal N}  J_{f} =  \displaystyle{n_{nr}}{\left(\displaystyle\frac{2\pi}{  \beta_{nr}}\right)^{-3/2}}\exp \left\{-
\beta_{ f_{nr}} \frac{|\bar\q -   \u_{nr} |^2}{2} \right\} + {\cal O}(\epsilon^{2}):=G({f_{nr}}) + {\cal O}(\epsilon^{2}),
\]
which is, up to order $\epsilon^{2}$, the classical Maxwellian $G(f_{nr})$, that can be also written as
\[
G(f_{nr})= \frac{ n_{nr} }{{ (2\pi T_{nr})^{3/2} } } \exp \left\{-  \frac{|\bar\q -
\u_{nr} |^2}{2 \,  T_{nr}} \right\}.
\]
Then, \eqref{relBGKscalled} becomes (up to order $\epsilon^2$) the classical BGK equation
\[
\partial_t f_{nr}+ {\bar{\q} }\cdot\nabla_x f_{nr}= \bar \omega (G(f_{nr}) - f_{nr}).
\]
We can conclude that, assuming the convergence of all the involved moments of $ f_{nr}$, the limit  satisfies the classical
BGK equation.

Next, for the energies, identities \eqref{q0called} and (\ref{uqscaled}) lead us to the following ex\-pre\-ssion
\[
e_f = {\cal N} c^2 \left( n_{nr} \Big(1-\frac{\epsilon^2}{2}|  \u_{nr}|^2\Big)+\frac{\epsilon^2}{2} \int_{\R^3} | \bar\q -   \u_{nr}|^2  f_{nr}
(\bar t,\bar x,\bar\q) d\bar\q \right)+ {\cal O} (\epsilon^3).
\]
The previous relationship can be seen as the usual expansion of the relativistic expression for
the kinetic energy for small velocities: the first term, ${\cal N} c^2 n_{nr}$, is the rest mass
energy of the system, while up to first of order in $\epsilon^2$, $\frac{{\cal N} c^2}{2} n_{nr} |  \u_{nr}|^2$ is the (Newtonian) kinetic energy of the center of mass and $\tfrac{1}{2}\int_{\R^3} | \bar\q -   \u_{nr}|^2  f_{nr} (\bar t,\bar x,\bar\q) d\bar\q$ the internal (Newtonian) kinetic energy
of the individual particles (with respect to the center of mass frame). The latter is a measure
for the temperature of the system, such that with (\ref{scaledT}) the above expression can be
written as
\begin{equation}
\label{restenergy}
e_f + \epsilon^2 \frac{{\cal N} c^2}{2} n_{nr} |  \u_{nr}|^2 =  {\cal N} c^2 \Big ( n_{nr} +  \frac32 \epsilon^2n_{nr}  T_{nr} \Big) + {\cal O}(\epsilon^{3}).
\end{equation}
Note that in the rest frame ($ \u_{nr}=0$) this expression reduces to the well-known relation
between the kinetic energy and the temperature of a monoatomic ideal gas (except for the rest
mass energy term).

Finally, we can also check the relation between the relativistic and classical  entropies. Using again the previous expressions for $u_f$,  we find,
\begin{equation}
\label{entropia}
\sigma_f = -\frac{k_B{\cal N}}{m} \int_{\R^3}   f_{nr}(\bar t,\bar x,\bar\q ) \ln \left(\frac{ f_{nr} (\bar t,\bar x,\bar\q)}{\bar \eta}\right) d\bar\q+{\cal O} (\epsilon^2),
\end{equation}
where the first term of the right hand side corresponds, up to order $\epsilon^2 $, to the classical entropy.  We also stress that, in the light of what was done in here, Lemma \ref{maximumPrinciple} reduces in the non  relativistic limit to the usual maximum entropy principle for the  Maxwell--Boltzmann distribution. This is a direct consequence of  relation (\ref{entropia}) and the fact that the local energy reduces  to the rest mass in the non relativistic limit --see (\ref {restenergy})--, but then both the density of $f$ and that of the  associated Maxwellian are equal. Thus $(\beta e)_{J_f}-(\beta e)_f$  cancels in that limit and the functional in Lemma  \ref {maximumPrinciple} converges to the classical entropy.

\subsection{Derivation of the relativistic Euler equations}
\label{relEuler}

This part is devoted to the connection between the relativistic BGK
model (\ref{relBGK}) and the relativistic Euler equations. These are obtained as a particular limit of the BGK model. Obtaining hydrodynamical hyperbolic limits from the  relativistic BGK model is one of the goals of this paper, but the  analysis given in the previous subsection allows to
connect this type of  limits in the relativistic setting with their counterpart limits for the classical BGK system. At least in the framework of regular
solutions, hyperbolic and non relativistic (Newtonian) limits should commute. Then, in this sense we can assure that the relativistic Euler equations
obtained in the next Section \ref{relEuler} are the relativistic counterpart to the classical ones derived from the classical BGK model.

In order to proceed with the hyperbolic hydrodynamic limit, we choose the second scaling given in \eqref{3escalas}.
It corresponds to a fluid  behavior:  we assume that the time between collisions  $1/\omega$ is small compared with the typical time $\tau$, and the fluid description takes place.  Then, if we take  now $\epsilon = \omega \tau$, we put  $\eta \mu^3/{\cal N}=1$ in order to abbreviate the expression for the entropy
and we eliminate the superscript ``$\bar{\ }$"{} to simplify the notation, we obtain the  following  fluid-scaled version of the relativistic BGK model:

\begin{equation}
\label{BGKscalEuler}
\partial_t f_{Eu}^{\epsilon}+\hat q   \cdot\nabla_x f_{Eu}^{\epsilon}= \frac{1}{\epsilon \,\sqrt{1+\q ^2}} (J_{f_{Eu}^{\epsilon}} - f_{Eu}^{\epsilon})\end{equation}
where now $\hat q = \q/\sqrt{1+\q ^2} $ and $J_{f_{Eu}^{\epsilon}}$ stands in dimensionless form:
\[
J_{f_{Eu}^{\epsilon}}= \frac{n_{f_{Eu}^{\epsilon}}}{M(\beta_{f_{Eu}^{\epsilon}})} \exp \left\{- \beta_{f_{Eu}^{\epsilon}} \left( \sqrt{1+\u_{f_{Eu}^{\epsilon}}^2}\sqrt{1 + |\q|^2} -  \u_{f_{Eu}^{\epsilon}} \cdot \q \right)Ê\right\}.
\]
Also, the conservation laws  (\ref{Mass}), (\ref{energy}) and (\ref{momentum}) read now
\begin{eqnarray}
\label{MassEuler} &&
\frac{\partial}{\partial t} \int_{\R^3} f_{Eu}^{\epsilon} d\q \ +  \div_x \int_ {\R^3} \frac{\q f_{Eu}^{\epsilon}}{\sqrt{1 + |\q|^2}}\ d\q = 0, \\
\label{energyEuler} &&
\frac{\partial}{\partial t} \int_{\R^3} \sqrt{1+ |\q|^2} f_{Eu}^{\epsilon} d\q \ +
\div_x \int_{\R^3} \q f_{Eu}^{\epsilon} d\q = 0, \\
\label{momentumEuler} &&
\frac{\partial}{\partial t} \int_{\R^3} \q^i f_{Eu}^{\epsilon} d\q \  + \frac{\partial}{\partial x_j}  \int_{\R^3}
    \frac{\q^i \q^j f_{Eu}^{\epsilon} }{\sqrt{1 + |\q|^2}}\ d\q = 0 .
\end{eqnarray}
Let us denote by $\{f_{Eu}^{\epsilon}(t, x, q) \}_{\epsilon \geq 0}$ a sequence of
nonnegative solutions of the equation (\ref{BGKscalEuler}) such that $f_{Eu}^{\epsilon}$
converges almost everywhere to a nonnegative function $f_{Eu}$, as
$\epsilon$ goes to zero. Assume also that all the moments converge
to their corresponding moments and  that $J_{f_{Eu}^{\epsilon}}$ converges
to $ J_{f_{Eu}}$ in an appropriate sense to be specified later. Then,
multiplying (\ref{BGKscalEuler}) by $\epsilon$ and letting $\epsilon$ go to $0$,
one has in the limit:
\begin{equation}
f_{Eu}= J_{f_{Eu}}.
\nonumber
\end{equation}
Then, deriving  macroscopic equations from (\ref{MassEuler}), (\ref{energyEuler}) and (\ref{momentumEuler}), will require information on different
moments of the equilibrium $J_{f_{Eu}}$, stated in Lemma \ref{lema3}. Under suitable convergence assumptions (which we prove below),
we can ensure that
\[
\int_{\R^3} A(\q) \, f_{Eu}^{\epsilon} \, d\q \  \to \
\int_{\R^3}  A(\q) \, f_{Eu} \, d\q = \int_{\R^3} A(\q)  \, J_{f_{Eu}} \, d\q,
\]
where $A(\q)$ stands for all the involved moments of $f_{Eu}$, that is,
\[
A(\q)= \left(1, \ \q^i, \ \ \sqrt{1+|\q|^2} , \ \frac{\q^i}{\sqrt{1+|\q|^2}},\  \frac{\q^i \q ^j}{\sqrt{1+|\q|^2}} \right).
\]
These convergences combined with (\ref{MassEuler}), (\ref{energyEuler}), (\ref{momentumEuler}) and the computations of
Lemma \ref{lema3} lead to
\begin{eqnarray}
&&  \label{Euler1}
\partial_t \left(n_{f_{Eu}} \sqrt{1+|\u_{f_{Eu}}|^2} \right) +   \div_x
(n_{f_{Eu}} \u_{f_{Eu}})= 0,
\\ \nonumber
&& \partial_t \left(-p_{f_{Eu}} \!+ n_{f_{Eu}} \chi(\beta_{f_{Eu}}) (1+|\u_{f_{Eu}}|^2) \right) \\
\label{Euler2}
&& \hspace{2 cm}+\div_x \!\left(\!n_{f_{Eu}} \chi(\beta_{f_{Eu}})\u_{f_{Eu}} \!\sqrt{1+|\u_{f_{Eu}}|^2} \right) = 0,\qquad
\\
\nonumber
&& \partial_t \left(n_{f_{Eu}} \chi(\beta_{f_{Eu}}) \u_{f_{Eu}}^i \sqrt{1+|\u_{f_{Eu}}|^2} \right)\\
\label{Euler3}
&&\hspace{2 cm} + \partial_{x_j} \left(p_{f_{Eu}} \delta^{ij}+ n_{f_{Eu}}\chi(\beta_{f_{Eu}}) \u_{f_{Eu}}^i \u_{f_{Eu}}^j \right)=0,
\end{eqnarray}
respectively, where  $\chi(\beta)= \frac{1}{ \beta}+ \Psi(\beta)$ with $\Psi (\beta)$ given by \eqref{defpsi} in the Appendix. The result that we have sketched can be stated as follows:
\begin{theorem}
Let $f_{Eu}^{\epsilon}(t, x, \q)$  be a sequence
of solutions of the equation \eqref{BGKscalEuler} with nonnegative  initial condition $0\leq f_{Eu}^{\epsilon}(0, x, \q)= f_{Eu}^{\epsilon, I}(x, \q)$ verifying  that,
\[
\int_{\R^3}\int_{\R^3} \big(1+\sqrt{1+|\q|^2}+|x|
+|\ln (f_{Eu}^{\epsilon, I}(x, \q) )|\big)
\,  f_{Eu}^{\epsilon, I}(x, \q) d\q dx <C <\infty .
\]
Assume also that $f_{Eu}^{\epsilon}(t, x, \q)$ converges almost everywhere to 
$f_{Eu}(t, x, \q)$, as $\eps$ goes to
zero. Then, the pointwise limit $f_{Eu}(t,x,\q)$ is a J\"uttner
distribution whose moments are the
corresponding  limits of those of $f_{Eu}^{\epsilon}$. In particular, the
  functions $n_{f_{Eu}}, \u_{f_{Eu}}$ and $\beta_{f_{Eu}}$ associated with $f_{Eu}$ solve the
relativistic Euler equations \eqref{Euler1}--\eqref{Euler3}.
Moreover,  the limit $f_{Eu}(t,x,\q)$ does also satisfy the following entropy inequality
\begin{eqnarray}
&& \partial_t \left(n_{f_{Eu}} \sqrt{1+|\u_{f_{Eu}}|^2}\left(  \ln \Big( \frac{n_{f_{Eu}}}{
M(\beta_{f_{Eu}})}\Big) -\beta_{f_{Eu}} \Psi(\beta_{f_{Eu}}) \right) \right) \nonumber \\
&& \hspace{1.8cm}+ \div_x \left(  n_{f_{Eu}} \u_{f_{Eu}} \left( \ln
\Big( \frac{n_{f_{Eu}}}{ M(\beta_{f_{Eu}})}\Big)-\beta_{f_{Eu}} \Psi(\beta_{f_{Eu}}) \right) \right) \leq 0.
\label{cinqseis}
  \end{eqnarray}
  \end{theorem}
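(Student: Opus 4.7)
The plan is built around three linked tasks: deriving uniform bounds that yield weak compactness of the relevant moments; identifying the almost-everywhere limit $f_{Eu}$ with a local J\"uttner distribution; and passing to the limit in the conservation laws and in the $H$-theorem.

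First, I would propagate the assumed initial bound on $(1+\sqrt{1+|\q|^2}+|x|+|\ln f_{Eu}^{\epsilon,I}|)f_{Eu}^{\epsilon,I}$ in time. Mass and energy conservation \eqref{MassEuler}--\eqref{energyEuler} give uniform bounds on $\int f_{Eu}^{\epsilon}\, d\q\,dx$ and $\int \sqrt{1+|\q|^2}\,f_{Eu}^{\epsilon}\, d\q\,dx$; the $|x|$-moment is controlled through the characteristics, since the transport velocity $\hat q$ has modulus at most $1$. Applying the scaled $H$-theorem \eqref{entropyscaled} to $f_{Eu}^\epsilon$ controls $\int f_{Eu}^{\epsilon}\log f_{Eu}^{\epsilon}\,d\q\,dx$ from above, while the energy moment together with standard $s\log s \geq -s - C s/\sqrt{1+|\q|^2}$-type inequalities gives control from below. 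Combined via De la Vall\'ee Poussin's criterion, this yields uniform integrability of $A(\q)f_{Eu}^{\epsilon}$ in $\q$, which upgrades the assumed a.e.~convergence to convergence $\int A(\q)f_{Eu}^{\epsilon}\,d\q \to \int A(\q)f_{Eu}\,d\q$ in $L^1_{loc}(dt\,dx)$ via Vitali's theorem.

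Second, I would multiply \eqref{BGKscalEuler} by $\epsilon\sqrt{1+|\q|^2}$ and pass to the distributional limit: the left-hand side vanishes and the right-hand side becomes $J_{f_{Eu}}-f_{Eu}$ once one has established that $J_{f_{Eu}^{\epsilon}}\to J_{f_{Eu}}$. For the latter, I would invoke Lemma \ref{Bellouq}: $n_{f_{Eu}^{\epsilon}}$ and $\u_{f_{Eu}^{\epsilon}}$ are continuous functions of $N^\mu[f_{Eu}^{\epsilon}]$ via \eqref{nefe}--\eqref{uefe}, and $\beta_{f_{Eu}^{\epsilon}}$ depends continuously on the same moments because the function $K_1/K_2$ appearing in \eqref{betadef} is smooth and strictly monotone on $(0,\infty)$ with range $[0,1)$. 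Convergence of the hydrodynamic fields $(n_{f_{Eu}^\epsilon},\u_{f_{Eu}^\epsilon},\beta_{f_{Eu}^\epsilon})$ combined with the pointwise formula for $J(n,\beta,\u;\q)$ yields $J_{f_{Eu}^{\epsilon}}\to J_{f_{Eu}}$ a.e., hence $f_{Eu}=J_{f_{Eu}}$, i.e., the limit is a J\"uttner with parameters $(n_{f_{Eu}},\u_{f_{Eu}},\beta_{f_{Eu}})$ whose moments coincide with the limits of those of $f_{Eu}^\epsilon$.

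Third, I would pass to the limit in the conservation laws \eqref{MassEuler}--\eqref{momentumEuler}: their linear left-hand sides pass under the uniform integrability established above, and using $f_{Eu}=J_{f_{Eu}}$ together with the explicit moment formulas of Lemma \ref{lema3}, each integral is rewritten in terms of $n_{f_{Eu}}, \u_{f_{Eu}}, p_{f_{Eu}}$ and $\chi(\beta_{f_{Eu}})$, yielding \eqref{Euler1}--\eqref{Euler3}. The entropy inequality \eqref{cinqseis} is obtained by passing to the lim inf in \eqref{entropyscaled}, using convexity of $s\mapsto s\log s$ for weak lower semicontinuity of the entropy density, together with a.e.~convergence plus uniform integrability for the entropy flux; Lemma \ref{lema3} is then invoked to evaluate the entropy four-vector $S^\mu$ of $J_{f_{Eu}}$ explicitly in terms of $n_{f_{Eu}}, \u_{f_{Eu}}, \beta_{f_{Eu}}, M(\beta_{f_{Eu}}), \Psi(\beta_{f_{Eu}})$, recasting the inequality in the announced form. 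The main obstacle will be upgrading weak convergence of moments into the strong convergence of the hydrodynamic parameters $(n_{f_{Eu}^{\epsilon}}, \u_{f_{Eu}^{\epsilon}}, \beta_{f_{Eu}^{\epsilon}})$ needed to pass to the limit in the genuinely nonlinear products on the right-hand sides of \eqref{Euler2}--\eqref{Euler3}, such as $n_{f_{Eu}^\epsilon}\chi(\beta_{f_{Eu}^\epsilon})\u_{f_{Eu}^\epsilon}^i\u_{f_{Eu}^\epsilon}^j$; this rests on keeping $\beta_{f_{Eu}^{\epsilon}}$ within a compact subset of $(0,\infty)$ and on a strict lower bound for $n_{f_{Eu}^{\epsilon}}$ so that the parameter map inverted from Lemma \ref{Bellouq} remains uniformly continuous along the sequence.
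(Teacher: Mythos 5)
Your proposal follows essentially the same route as the paper's proof: the same \emph{a priori} estimates (mass and energy from the conservation laws, the $|x|$-moment from the bounded transport speed, and the $f|\ln f|$ bound from the $H$-theorem combined with a Carleman-type inequality), uniform integrability to upgrade the assumed a.e.\ convergence to strong $L^1$ convergence of the moments, identification of the limit with a J\"uttner distribution by multiplying \eqref{BGKscalEuler} by $\epsilon$ and using continuity of the parameter maps from Lemma \ref{Bellouq}, and passage to the limit in \eqref{MassEuler}--\eqref{momentumEuler} and \eqref{entropyscaled} via the explicit moment and entropy-flux formulas. The only preliminary you omit is the (easy) propagation of nonnegativity from the initial data, which the paper establishes first along characteristics and which is needed before the macroscopic quantities and entropy estimates make sense.
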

\begin{proof}
We first prove that the solutions are also nonnegative functions. To do that it suffices to  observe that the J\"uttner function on the rhs of  \eqref{BGKscalEuler} is nonnegative and then
$$
\frac{d}{dt}\Big[f_{Eu}^{\epsilon} (t,x+\hat{q}t,\q) \Big] \ge - f_{Eu}^{\epsilon}(t,x+\hat{q}t,\q)
$$
and so, integrating, we deduce  $ f_{Eu}^{\epsilon} (t,x,\q) \ge e^{-t} f_{Eu}^{\epsilon, I}(x-\hat{q}t,\q)\ge 0$. Second, as in the classical kinetic theory, we will prove the following \textit{a priori} estimates
\[
\int_{\R^3}\int_{\R^3} \big(1+\sqrt{1+|\q|^2}+|x|
+|\ln (f_{Eu}^{\epsilon}(t, x, \q) )|\big)
\,  f_{Eu}^{\epsilon, I}(x, \q) d\q dx <C(K) <\infty ,
\]
valid for any $t$ in an arbitrary compact set $[0,K]$. The first two estimates on
\[
\int_{\R^3}\int_{\R^3}f_{Eu}^{\epsilon}(t, x, \q) d\q dx \quad \mbox{and} \quad \int_{\R^3}\int_{\R^3}
\sqrt{1+|\q|^2}f_{Eu}^{\epsilon}(t, x, \q) d\q dx
\]
follow directly, using the non-negativeness, from \eqref{MassEuler} and \eqref{energyEuler} respectively after integration with respect to $x$. Also, multiplying \eqref{MassEuler} by $|x|$ and integrating, we obtain
\[
\frac{d}{dt} \int_{\R^3} \int_{\R^3} |x| f_{Eu}^{\epsilon} \ dxd\q =   \int_{\R^3}\int_{\R^3}  \frac{x \cdot \q}{|x| \sqrt{1+|\q|^2}} f_{Eu}^{\epsilon} \ dx\, d\q\leq C.
\]
Then,
\[
 \int_{\R^3} \int_{\R^3} |x| f_{Eu}^{\epsilon}(t,x,\q) \ dx\, d\q \leq \int_{\R^3} \int_{\R^3} |x| f_{Eu}^{\epsilon, I}(x,\q) \ dx\, d\q + K C, \ \ \forall \ t\in [0,K].
\]
On the other hand, \eqref{entropyscaled} reads now
\begin{equation}
\label{entropy21Euler}
\partial_t \int_{\R^3} f_{Eu}^{\epsilon} \ln(f_{Eu}^{\epsilon})\
d\q + \div_x \int_{\R^3} \frac{\q}{\sqrt{1+|\q|^2}} f_{Eu}^{\epsilon} \ln(f_{Eu}^{\epsilon}) d\q \leq 0,
 \end{equation}
and so, integration on $x\in \R^3$ gives us
\[
\int_{\R^3} \int_{\R^3} f_{Eu}^{\epsilon} \ln(f_{Eu}^{\epsilon}) \ dx\, d\q \leq \int_{\R^3} \int_{\R^3}f_{Eu}^{\epsilon, I} \ln(f_{Eu}^{\epsilon, I}) \ dx\, d\q \ \forall \ t\in [0,K].
\]
The last \textit{a priori} estimate on $f_{Eu}^{\epsilon}  |\log (f_{Eu}^{\epsilon} )|$ follows straightforwardly  from the following trick due to Carleman:
\[
s\, |\ln(s)| \leq s\, \ln(s) +k \, s +2 e^{-\frac{k}{4}}, \qquad \forall s>0, \ \forall k\geq 0.
\]
by choosing $s=f_{Eu}^{\epsilon}(t,x,\q)$, $k=|x|+\sqrt{1+|\q|^2}$ and using previous estimates.

These a priori estimates allow us, via Dunford--Pettis theorem, to pass to the limit weakly in $L^1(\R^3 \times \R^3 )$, uniformly in $t\in [0,K]$. The pointwise convergence together with this weak convergence allow to pass to the limit strongly in $L^1(\R^3 \times \R^3 )$, uniformly in $t\in [0,K]$. Then, the
 associated quantities $n_{f_{Eu}^{\epsilon}}$, $\u_{f_{Eu}^{\epsilon}}$ and $\beta_{f_{Eu}^{\epsilon}}$ pass to the limit strongly to their respective ones. As $J_{f_{Eu}}$ is a continuous function on the variables $n$, $\u$ and $\beta$ we get that $J_{f_{Eu}^{\epsilon}} \to J_{f_{Eu}}$. On the other hand, as we said before, multiplication of \eqref{BGKscalEuler} by $\epsilon$ and  passage to the limit for $\epsilon \to 0$ give that the limits of $J_{f_{Eu}^{\epsilon}}$ and that of $f_{Eu}^{\epsilon}$ coincide, at least in a distributional sense,  then $J_{f_{Eu}} =f_{Eu}$ a.e.

Finishing  the proof only requires to take limits in (\ref{MassEuler})--(\ref{momentumEuler}) to obtain \eqref{Euler1}--\eqref{Euler3} and in  \eqref{entropy21Euler} to obtain \eqref{cinqseis}. The only two involved quantities that pass only weakly to the limit are
\[
\int_{\R} \q f_{Eu}^{\epsilon} (t,x,\q) d\q \qquad \textit{and} \qquad  \int_{\R} f_{Eu}^{\epsilon} (t,x,\q) \ln (f_{Eu}^{\epsilon} (t,x,\q)) d\q,
\]
but this is enough for our purposes, because the pointwise convergence allows to identify their weak limits as
\[
\int_{\R} \q J_{f_{Eu}} (t,x,\q) d\q \qquad \textit{and} \qquad  \int_{\R} J_{f_{Eu}} (t,x,\q) \ln (J_{f_{Eu}} (t,x,\q)) d\q,
\]
respectively. So, computations of  Lemma \ref{sieben} concludes the proof.
\end{proof}

\begin{Remark}
As in the classical case, the connection between kinetic and
ma\-cros\-co\-pic Euler fluid dynamics results from the properties of the
collision operator. These are the conservation equations and the
entropy relation (which implies that the equilibrium is a Maxwellian
distribution for the zeroth-order limit)
\cite{BGL,BGLI,BGLII,Bellouquid2,Saint}. Analogously in the
relativistic case \cite{Kunik2004,speck-strain} these properties
allow to obtain the relativistic Euler equations as an hydrodynamic
limit. Therefore it is very natural to obtain here the same
relativistic Euler equations.
\end{Remark}

\begin{Remark}
Let us mention that the equation of state that we would obtain in the formal hydrodynamical limit to the relativistic Euler equations --a state equation which consists on extrapolating the relations among the thermodynamical quantities of local J\"uttner equilibria to arbitrary solutions of the fluid equations-- is already discussed in \cite{speck-strain}, under the name of kinetic equation of state. The results in \cite{speck-strain,Calvo2012} show that under such equation of state the relativistic Euler system is hyperbolic and causal, and the speed of sound is bounded by $c/\sqrt{3}$.
\end{Remark}

\subsection{The ultra-relativistic Limit of the relativistic BGK model}
\label{UltrarelEuler}
Let us now consider the ultra-relativistic limit of our relativistic BGK model. Some of the results in this sense --as the limiting behavior of the J\"uttner equilibrium or the ultra-relativistic Euler equations-- have been partially discussed elsewhere (\cite{CercignaniKremer,Kunik2001,Kunik2004}), but we include them to give a completely coherent picture. The ultra-relativistic limit corresponds to the third
scaling given in \eqref{3escalas}, that is $m c < < \mu$, and can be motivated as an asymptotic behavior when the rest mass $m$ is very small compared with $\mu /c$, as it happens with neutrinos or even photons in the limit case of zero rest mass. Then, normalizing the remaining two constants  $\bar \omega := \omega  \tau$ and $\bar \eta :=\eta \mu^3/{\cal N}$ as in section \ref{ClassicalLimit}, and defining now $\epsilon := mc/\mu < < 1$, the system \eqref{barBGK} becomes
\begin{equation}
\partial_t f_{ur} +\frac{\bar \q \cdot \nabla_x f_{ur}}{\sqrt{\epsilon^2+|\bar \q|^2}} = \frac{\bar \omega}{\sqrt{\epsilon^2+|\bar \q|^2}} \Big(\frac{\mu^3}{{\cal N}}J_f-f_{ur}\Big),
\label{BGKscaledUR}
\end{equation}
Here $f_{ur}=\mu^3 f /{\cal N}$ denotes the scaled distribution under this ultra-relativistic scaling. As in subsection \ref{ClassicalLimit}, it essentially remains to work on the expression of the right hand side.
To do that, we define  the ultra-relativistic quantities associated to the distribution $f_{ur}$:
\begin{eqnarray}
n_{ur}&:=&  \left[ \left( \int_{\R^3} f_{ur}(\bar t,\bar x,\bar  \q) d \bar \q \right)^2 -  \sum_{i=1}^3  \left( \int_{\R^3} \frac{\bar \q_i}{|\bar \q |} f_{ur}(\bar t,\bar x,\bar \q) d \bar \q \right)^2 \right]^{\frac12}, \nonumber \\
 \u_{ur} &:=&  \left(\int_{\R^3} \frac{\bar \q}{|\bar \q |} f_{ur}(\bar t,\bar x,\bar \q) d \bar \q \right) / n_{ur}, \nonumber \\
\beta_{ur} &:=& 2 \left(\int_{\R^3} \frac{1}{|\bar \q |} f_{ur}(\bar t,\bar x,\bar \q) d \bar \q \right) / n_{ur}. \label{defbetaur}
\end{eqnarray}
We argue analogously as in section \ref{ClassicalLimit} to obtain the following asymptotic expansions
\begin{eqnarray}
\frac{1}{q^0}&:=& \frac{1}{\mu c} \frac{1}{\sqrt{\epsilon^2+|\bar \q|^2}} =  \frac{1}{\mu c}  \frac{1}{|\bar q|} + {\cal O}(\epsilon^2) , \label{q0ur} \\
n_{f} &:=&{\cal N}   \left[ \left( \int_{\R^3} f_{ur}d \bar \q \right)^2\! \! -  \sum_{i=1}^3  \left( \int_{\R^3} \frac{\bar \q_i f_{ur} d \bar \q }{\sqrt{\epsilon^2+|\bar \q|^2}}  \right)^2 \right]^{\frac12} \!\! = {\cal N} n_{ur} +  {\cal O}(\epsilon^2), \label{nur} \\
  \u_f  &:=& \frac{{\cal N}}{n_{f}}  \int_{\R^3} \frac{\bar \q  }{\sqrt{\epsilon^2+|\bar \q|^2}} f_{ur}(\bar t,\bar x,\bar  \q) d \bar \q =  \u_{ur} +  {\cal O}(\epsilon^2) .\label{uur}
\end{eqnarray}
Also, to obtain the behavior of $\beta_f$ we first use \eqref{betadef} combined with the expansions \eqref{q0ur} and \eqref{nur} and definition \eqref{defbetaur} to deduce
\begin{eqnarray*}
\displaystyle\frac{K_1}{K_2}(\beta_f) &=&
\frac{\epsilon {\cal N}}{n_f} \,   \displaystyle \int_{\R^3}  \frac{f_{ur} }{\sqrt{\epsilon^2+|\bar \q|^2}} d\q = \frac{\epsilon}{n_{ur}}\,   \displaystyle \int_{\R^3} \frac{1}{|\bar \q |} f_{ur}(\bar t,\bar x,\bar \q) d \bar \q  +   {\cal O}(\epsilon^2)\\
& =& \frac{\epsilon}{2} \beta_{ur} +   {\cal O}(\epsilon^2).
\end{eqnarray*}
In particular,  note that now $K_1(\beta_f)/K_2 (\beta_f)$ approaches 0, so equivalently $\beta_f$ is small and justifies the  following expansion, which is valid for $\beta_f < <1$  (see \cite{librotablas}),
\[
\displaystyle\frac{K_1}{K_2}(\beta_f) = \frac{\beta_{f}}{2} + {\cal O}(\beta_f^2).
\]
Using finally  these two expansions of $K_1/K_2$ we conclude
\begin{equation}
\beta_f= \epsilon \,  \beta_{ur} + {\cal O}(\beta_f^2) + {\cal O}(\epsilon^2).
\label{betaur}
 \end{equation}
We are now ready to analyze the behavior of $\frac{\mu^3}{{\cal N}}J_f$. We first write  it as
\[
\frac{\mu^3}{{\cal N}}J_f = \frac{ n_f }{{\cal N} \, \epsilon^3 \, M(\beta_f)} \exp
\Big\{- \frac{\beta_f}{\epsilon}  \Big( \sqrt{ 1+ |\u_f|^2} \sqrt{\epsilon^2+ |\bar \q |^2} -\u_f\cdot \bar  \q \Big) \Big\}
\]
and then, using  the asymptotic relation (see \cite{librotablas})
$$
\frac{1}{M(\beta_f)} = \frac{1}{8 \pi}\beta_f^3 + {\cal O}(\beta_f^5) \quad \mbox{for}\ \beta_f < <1
$$
and the expansions  \eqref{nur}, \eqref{uur} and \eqref{betaur}, we deduce that
\[
\frac{\mu^3}{{\cal N}}J_f = \frac{n_{ur} \beta_{ur}^3 }{8\pi } \exp
\Big\{- \beta_{ur} \Big(  |\bar \q |\sqrt{ 1+ |\u_{ur}|^2}   -\u_{ur}\cdot \bar \q \Big) \Big\} + {\cal O}(\epsilon).
\]
Then, up to order $\epsilon$,  \eqref{BGKscaledUR} becomes the ultra-relativistic BGK equation
\begin{eqnarray*}
&&\hspace{-1cm} \partial_t f_{ur} +\frac{\bar \q}{|\bar \q|}  \cdot \nabla_x f_{ur}  \\
&&= \frac{\bar \omega}{|\bar \q|} \left(
\frac{n_{ur} \beta_{ur}^3 }{8\pi } \exp
\Big\{- \beta_{ur} \Big( |\bar \q | \sqrt{ 1+ |\u_{ur}|^2}   -\u_{ur} \cdot \bar \q \Big) \Big\}
-f_{ur}\right).
\end{eqnarray*}
To conclude we compute various macroscopic quantities of local equilibria in this limit.
The calculations are based on the asymptotic expansion of  $\Psi(\beta)$ \eqref{defpsi}
\begin{equation}
\label{psinear}
\Psi(\beta) \sim \frac{3}{\beta} + {\cal O}( \beta) \quad \mbox{for}\ \beta<<1
\end{equation}
(see \cite{librotablas} for instance). Thanks to Lemma \ref{average_energy} we have
$$
e_{J_f}=c^2n_f\Psi(\beta_f) \sim \frac{3}{\beta_{ur}} c^2 n_{ur} \frac{{\cal N}}{\epsilon}:= e_{ur}\frac{{\cal N}}{\epsilon}.
$$
In a similar way,
$$
p_{J_f}=c^2\frac{n_f}{\beta_f} \sim c^2 \frac{n_{ur}}{\beta_{ur}} \frac{{\cal N}}{\epsilon}:=p_{ur}\frac{{\cal N}}{\epsilon}.
$$
Note that in this regime the energy amounts to three times the pressure. We can also use Lemma \ref{lema4} to obtain
$$
 N_{J_f}^\mu \sim {\cal N} n_{ur} \ u_{ur}^\mu,\quad T^{\mu \nu} \sim - \frac{{\cal N}}{\epsilon} p_{ur} g^{\mu \nu}+ 4 \frac{{\cal N}}{\epsilon} p_{ur}\, u_{ur}^\mu\ u_{ur}^\nu.
$$

\section{Linearization of the relativistic BGK operator}
\label{displaylinearization}
From now on, we will use the dimensionless system \eqref{barBGK} by using the choice \eqref{normalscale}. If we skip the superscript $``\bar{\ }"$ to simplify the notation,  it becomes
\begin{equation}
\label{BGKhtheorem}
\partial_t f+\hat{q}\cdot\nabla_x f= \frac{1}{q^0} (J_f - f),
\end{equation}
where $q^0=\sqrt{1+|\q|^2}$, $\hat{q}=\q/q^0$ and the local equilibria  given by J\"uttner distributions  (or relativistic Maxwellian) takes the dimensionless form:
$$
J (n,\beta ,\u;\q) = \frac{n}{M(\beta)} \exp \{- \beta (\sqrt{(1+|\u|^2)(1 + |\q|^2)} - \u \cdot \q)Ê\}.
$$
We will study solutions close to a family of global equilibria, those of the form
$$
  J^0\equiv J(1,\beta_0,0;\q)=\frac{e^{-\beta_0 \sqrt{1+|\q|^2}}}{M(\beta_0)}=\frac{e^{-\beta_0 q^0
}}{M(\beta_0)}, \qquad 0< \beta_0 < \infty.
$$
From the mathematical point of view it is handy to bring in a new variable $\alpha$ defined by means of
\begin{equation}
\beta = \equis(\alpha):= \left(\frac{K_1}{K_2}\right)^{-1}(\alpha),
\quad \mbox{where} \quad \alpha:= \frac{1}{n_f}\int_{\R^3} f \frac{d\q}{q^0}.
\nonumber
\end{equation}
It is in one-to-one correspondence with $\beta$, see Section \ref{mono}. Many related formulae can be written down in a more
compact way using this parameter instead. Thus, we hereafter consider
$$
J^0 \equiv J(1,\alpha_0,0;\q)=\frac{e^{-\equis(\alpha_0) \sqrt{1+|\q|^2}}}{M(\equis(\alpha_0))}=\frac{e^{-\equis(\alpha_0) q^0
}}{M(\equis(\alpha_0))}, \quad 0< \alpha_0= \equis^{-1}(\beta_0)< 1.
$$

Our aim will be to find an equation for $f$ such that $g=J^0 + \sqrt{J^0}f$
is a solution to the relativistic BGK model  (\ref{BGKhtheorem}). To do that we will linearize the relativistic BGK operator around $J^0$
by using Taylor's formula. It will be useful to control the moments of the perturbation.

\begin{Remark}
\label{negativo}
In the following we shall consider perturbations $h=\sqrt{J^0}f$ which need not be positive, so that objects like $n_h$ become meaningless.
Nevertheless, we notice that in the following the object $n_h$ does only appear involved
in combinations like $n_h^2$, $n_hu_h$, $n_h\sqrt{1+|u_h|^2}$ or $n_h\alpha_h$. These are just integrals which make perfect sense even if $h
$ takes negative values.
 So, in order not to distinguish the cases in which $h\ge 0$ from those in which this is not so, we \textnormal{define}
 \begin{eqnarray*}
 n_h^2 &:= &\left(\int_{\R^3} h \ d\q \right)^2 - \sum_{i=1}^3 \left(\int_{\R^3} q^i h \ \frac{d\q}{q^0
} \right)^2,
 \\
  n_h \u_h &:= &\int_{\R^3} \q h \ \frac{d\q}{q^0
},
  \\
   n_h \sqrt{1+|\u_h|^2} &:= & \int_{\R^3} h \ d\q,
   \\
    n_h \alpha_h &:= &\int_{\R^3}  h \ \frac{d\q}{q^0
},
 \end{eqnarray*}
and in this way we can keep the same notation as in the nonnegative case. Let us also remark that $n_h\u_h,\, n_h \sqrt{1+|\u_h|^2}$ and $n_h \alpha_h$ are linear in $h$.
\end{Remark}

\begin{lemma}
\label{le10}
Let $g=J^0 + h$ with $h= f \sqrt{J^0} $. Then, the associated
macroscopic quantities to the function $g$ are given  by the following formulae:
\begin{eqnarray*}
&&(i) \  n_g = \sqrt{\left(1+ n_h \sqrt{1+|\u_h|^2} \right)^2 - n_h^2 |\u_h|^2} = \sqrt{1+ n_h^2 +2n_h \sqrt{1+|\u_h|^2}} ,
\\
&& (ii) \  n_g \u_g = n_h \u_h, \\
&& (iii) \ n_g\alpha_g  = \int_{\R^3} g \frac{d\q}{q^0
}   =n_h \alpha_h + \alpha_0.
\end{eqnarray*}
\end{lemma}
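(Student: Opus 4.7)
The proof is essentially a bookkeeping exercise based on the linearity of the defining integrals from the Remark, combined with the elementary moments of the global equilibrium $J^0$. The plan is as follows.

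First I would record the three relevant moments of $J^0 = J(1,\beta_0,0;\q)$. Since $J^0$ is itself a J\"uttner distribution with parameters $n=1$, $\u=0$, $\beta=\beta_0=\equis(\alpha_0)$, the formulas \eqref{rel1} and \eqref{nuJ} together with the definition of $\equis$ give
\[
\int_{\R^3} J^0\,d\q = 1,\qquad \int_{\R^3} \q\, J^0\,\frac{d\q}{q^0}=0,\qquad \int_{\R^3} J^0\,\frac{d\q}{q^0}=\alpha_0.
\]

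Next, for (iii), I would simply use the linearity of integration: since $g=J^0+h$,
\[
n_g\alpha_g = \int_{\R^3} g\,\frac{d\q}{q^0} = \int_{\R^3} J^0\,\frac{d\q}{q^0} + \int_{\R^3} h\,\frac{d\q}{q^0} = \alpha_0 + n_h\alpha_h,
\]
where the last integral has been identified with $n_h\alpha_h$ via the convention of the Remark. Statement (ii) follows in exactly the same manner from
\[
n_g\u_g = \int_{\R^3} \q\, g\,\frac{d\q}{q^0} = 0 + n_h\u_h.
\]

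For (i), I would use the manifestly Lorentz-invariant expression $n_g^2 = \bigl(\int g\,d\q\bigr)^2-\sum_i\bigl(\int q^i g\,\frac{d\q}{q^0}\bigr)^2$ (which is just $N^\mu N_\mu$ for $g$). Splitting $g=J^0+h$ in each integral and using the moments of $J^0$ computed above together with part (ii) yields
\[
n_g^2 = \bigl(1+n_h\sqrt{1+|\u_h|^2}\bigr)^{2} - n_h^{2}|\u_h|^{2},
\]
which is the first identity in (i). Expanding the square and using $1+|\u_h|^2-|\u_h|^2 = 1$ gives the equivalent form $n_g^2 = 1 + n_h^2 + 2n_h\sqrt{1+|\u_h|^2}$, whence (i) follows upon taking square roots.

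No step is really an obstacle, but the one point to be careful about is that $h=f\sqrt{J^0}$ is not assumed nonnegative, so that all occurrences of $n_h$, $\u_h$, $\alpha_h$ must be read through the linearized definitions of the Remark; this is exactly why the stated identities are meaningful and genuinely linear in $h$ through the combinations $n_h\u_h$, $n_h\sqrt{1+|\u_h|^2}$, $n_h\alpha_h$ that appear on the right-hand sides.
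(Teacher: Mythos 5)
Your proposal is correct and follows essentially the same route as the paper's proof: both rest on the linearity of the defining integrals of the Remark, the elementary moments $\int J^0\,d\q=1$, $\int \q J^0\,\tfrac{d\q}{q^0}=0$, $\int J^0\,\tfrac{d\q}{q^0}=\alpha_0$, and the identity $n_g^2=\bigl(\int g\,d\q\bigr)^2-\sum_i\bigl(\int q^i g\,\tfrac{d\q}{q^0}\bigr)^2$ to obtain (i) from (ii). No gaps.
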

\begin{proof}
It is mostly straightforward. We first note that
$$
n_g \u_g=  \int_{\R^3} \q (J^0+h ) \frac{d\q}{q^0
} = \int_{\R^3} \q \,
h \frac{d\q}{q^0
} =n_h \u_h,
$$
which proves $(ii)$. Now we obtain
$$
n_g^2( 1 + |\u_g|^2) = \left( \int_{\R^3} (J^0+h) \ d\q \right)
^2\! \! = \left(1+\int_{\R^3} h\ d\q \right)^2\!\! = \left(1+ n_h \sqrt{1+|\u_h|^2}\right)^2\! \! ,
$$
which combined with $(ii)$ gives $(i)$. To deal with $(iii)$ it suffices to write
$$
n_g \alpha_g = \int_{\R^3} (J^0+h)\frac{d\q}{q^0
}= \alpha_0 + n_h \alpha_h.
$$ This completes the proof.
\end{proof}

Let us denote by $D^2_{(n, \u, \alpha)} J^\theta $ the $5\times 5$ Hessian matrix of the J\"uttner function with respect to the
variables
$(n, \u, \alpha)$, at the point
$(n_\theta, \u_\theta, \alpha_\theta): =
 \theta(n_g,\u_g,\alpha_g)+(1-\theta)(1,0,\alpha_0)$, for any given $\theta \in [0,1]$. Being given $J_g=J(n_g,\alpha_g,\u_g; \q)$, one has
the following Taylor expansion.

\begin{lemma}
Let
$g=J^0 + h$ be a solution of the relativistic BGK equation \eqref{BGKhtheorem} with $h= f \sqrt{J^0} $. Then the following equality
\begin{eqnarray*}
J_g - J^0 &=&  \left(n_g-1\right)J^0 (\q) + \frac{n_h\,  \beta_0}{n_g}  \u_h\cdot \q \, J^0 (\q)
\\
&& -\left( \frac{n_h \alpha_h + \alpha_0}{n_g} - \alpha_0 \right) \equis'(\alpha_0)
\left(\frac{M' (\beta_0)}{M (\beta_0)} +q^0
\right) J^0 (\q)
\\
&& + \int_0^1 (1-\theta) (n_g -1, \u_g, \alpha_g -\alpha_0) D^2 J^\theta (n_g -1, \u_g, \alpha_g -\alpha_0) d\theta\\
&&:= T_1+T_2+T_3+\tilde{\Gamma}
\end{eqnarray*}
holds.
\label{LemmaTaylor}
\end{lemma}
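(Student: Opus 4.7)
The plan is to apply Taylor's theorem with integral remainder to the map $(n,\u,\alpha)\mapsto J(n,\equis(\alpha),\u;\q)$, viewed as a smooth function of five scalar parameters (with $\q$ as a spectator variable), expanded around the base point $(1,0,\alpha_0)$ and evaluated at $(n_g,\u_g,\alpha_g)$. The integral form of Taylor's theorem then reads
\begin{equation*}
J_g - J^0 = \partial_n J^0 (n_g-1) + \nabla_\u J^0 \cdot \u_g + \partial_\alpha J^0 (\alpha_g - \alpha_0) + \tilde{\Gamma},
\end{equation*}
where $\tilde{\Gamma}$ is precisely the stated integral remainder involving the Hessian $D^2 J^\theta$ along the segment $\theta(n_g,\u_g,\alpha_g)+(1-\theta)(1,0,\alpha_0)$. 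So the whole task reduces to identifying the three linear terms with $T_1$, $T_2$, $T_3$.

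The first step is to compute the three partial derivatives at the base point. Writing $J(n,\equis(\alpha),\u;\q) = \frac{n}{M(\equis(\alpha))} \exp\{-\equis(\alpha)(\sqrt{(1+|\u|^2)(1+|\q|^2)}-\u\cdot\q)\}$, one finds immediately $\partial_n J\big|_{(1,0,\alpha_0)} = J^0$. For the velocity derivative, note that $\partial_{\u_i}\sqrt{1+|\u|^2}\big|_{\u=0}=0$, so only the linear term $-\u\cdot\q$ contributes, giving $\partial_{\u_i} J\big|_{(1,0,\alpha_0)} = \beta_0 \q^i J^0$. For the derivative in $\alpha$, using the chain rule through $\beta=\equis(\alpha)$ and evaluating the exponent at $\u=0$, where it reduces to $q^0$, yields $\partial_\alpha J\big|_{(1,0,\alpha_0)} = -\equis'(\alpha_0)\bigl(\tfrac{M'(\beta_0)}{M(\beta_0)}+q^0\bigr)J^0$.

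The second step is to rewrite $\u_g$ and $\alpha_g-\alpha_0$ in terms of the quantities built from $h$, via Lemma \ref{le10}. Items (ii) and (iii) of that lemma give $\u_g = n_h\u_h/n_g$ and $\alpha_g = (n_h\alpha_h+\alpha_0)/n_g$, so that $\alpha_g-\alpha_0 = (n_h\alpha_h+\alpha_0)/n_g - \alpha_0$. Plugging these expressions and the computed derivatives into the linear part of the Taylor expansion produces exactly the sum $T_1+T_2+T_3$ as written in the statement, while the integral remainder is $\tilde{\Gamma}$ by construction.

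I do not expect any serious obstacle here: the only delicate point is the chain rule that brings in $\equis'(\alpha_0)$ and the logarithmic derivative $M'(\beta_0)/M(\beta_0)$ in the $\alpha$-derivative, but this is a mechanical computation once one remembers that $\alpha$ enters $J$ only through $\beta=\equis(\alpha)$ in both the prefactor $1/M(\beta)$ and the exponent. The Hessian in the remainder is not to be computed explicitly; it is enough to know it exists and is bounded on the interpolation segment, which will later underpin the nonlinear estimates on $\tilde{\Gamma}$.
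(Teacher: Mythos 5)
Your argument is correct and coincides with the paper's own proof: both apply Taylor's formula with integral remainder to $J$ as a function of $(n,\u,\alpha)$ about $(1,0,\alpha_0)$, compute the same three first derivatives (with the same simplifications at $\u=0$, including the chain rule through $\beta=\equis(\alpha)$ producing $\equis'(\alpha_0)$ and $M'(\beta_0)/M(\beta_0)$), and identify $T_1,T_2,T_3$ via items (ii) and (iii) of Lemma \ref{le10}. No gaps.
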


\begin{proof}
The proof is a direct application of Taylor formula together with previous Lemma \ref{le10}, after computations of the
first derivatives of the J\"uttner function with respect to $n$, $\u$  and $\alpha$, which read
$$
\frac{\partial J}{\partial n}= \frac{J}{n}, \qquad \nabla_\u J = \beta_0 \left(\q -
\frac{q^0
}{\sqrt{1+|\u|^2}}\u \right)J,
$$
$$
\frac{\partial J}{\partial \alpha} = -\equis'(\alpha)\left(\frac{M'(\equis(\alpha))}{M(\equis(\alpha))} +
q^0
\sqrt{(1+|\u|^2)} -\u\cdot \q \right)J
$$
respectively. Note that we only need to use the above formulae for the values $(n,
\alpha, \u)$= $(1,\alpha_0,0)$, which simplifies the expressions.
\end{proof}

For the next step we set
$$
\kappa_0
:= \frac{3
\alpha_0}{\beta_0}+\alpha_0^2 -1.$$
Recall that $g = J^0 + f \sqrt{J^0}=J^0+h$. Now we let $P(f)$ be the following linear
operator (see Remark \ref{negativo})
\begin{eqnarray}
P(f) &:=&\left[ \left(\frac{\alpha_0q^0
-1}{
\kappa_0
}\right) n_h\sqrt{1 +
|\u_h|^2}
+ \left( \frac{\Psi(\beta_0)-q^0
}{
\kappa_0
}\right) n_h
\alpha_h + \beta_0  n_h \u_h \cdot \q \right] \sqrt{J^0} \nonumber
\\
&&\hspace{-1 cm}=\left(\frac{n_h(\Psi(\beta_0) \alpha_h -  \sqrt{1+|\u_h|^2})}{
\kappa_0
}\right) \sqrt{J^0} +
\left(\frac{n_h( \alpha_0 \sqrt{1+|\u_h|^2} - \alpha_h)}{
\kappa_0
}\right) q^0
 \sqrt{J^0} \nonumber
 \\
&&+ \beta_0  n_h \u_h \cdot \q  \sqrt{J^0} := A_f \sqrt{J^0} +B_f q^0
 \sqrt{J^0} + C_f \cdot \q \sqrt{J^0}.
\label{operatorP}
\end{eqnarray}
The following result shows how does the operator $P$ enter in our framework.
\begin{lemma} Let $g=J^0 + h$ be a solution of the relativistic BGK equation \eqref{BGKhtheorem} with
$h= f \sqrt{J^0} $. Then $f$ verifies
$$
\partial_t f + \hat{q} \cdot \nabla_x f = \frac{1}{q^0
}(P(f)-f)+ \Gamma(f)
$$
where the  nonlinear term $\Gamma(f)$ is  given by
\begin{eqnarray*}
\Gamma (f)&=& \left( \frac{n_h^2}{2} -  \frac{\#^2}{4(1+\# /2 +
\sqrt{1+\#})} \right) \frac{\sqrt{J^0}}{q^0
}
\\
&&
+\beta_0  \frac{\# }{(1+\sqrt{1+\#})\sqrt{1+\#} }\, n_h \u_h\cdot \q \frac{\sqrt{J^0}}{q^0
}\\
&& + n_h \alpha_h\,
  \frac{\#}{(1+\sqrt{1+\#})\sqrt{1+\#} }\frac{\sqrt{J^0}}{q^0
}\\
 && - \alpha_0 \left(\frac{n_h^2}{2} + \frac{\#^3 -3\#^2}{2(2+\#-\#^2+2\sqrt{1+\#})}\right)\frac{\Psi(\beta_0)-q^0
}{
\kappa_0
}\frac{\sqrt{J^0}}{q^0
}\\
&& + \frac{1}{q^0
 \sqrt{ J^0}} \int_0^1 (1-\theta) (n_g -1, \u_g,
\alpha_g -\alpha_0) D^2 J^\theta (n_g -1, \u_g,
\alpha_g -\alpha_0) d\theta.
\end{eqnarray*}
Here $J^\theta:=J(
\theta (n_g, \u_g,
\alpha_g)+(1-\theta)(1,0,\alpha_0))$ and $\#$ is given by $\sqrt{1+\#}=n_g=\sqrt{1+ n_h^2+2 n_h \sqrt{1+|\u_h|^2} }$.
\end{lemma}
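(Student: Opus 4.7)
The plan is to start from the relativistic BGK equation \eqref{BGKhtheorem} written for $g=J^0+h$ with $h=\sqrt{J^0}\,f$. Since $J^0$ does not depend on $(t,x)$, substitution gives $\sqrt{J^0}(\partial_t f+\hat{q}\cdot\nabla_x f)=(J_g-J^0-\sqrt{J^0}\,f)/q^0$, and dividing by $\sqrt{J^0}$ reduces the lemma to the pointwise identity
$$\frac{J_g-J^0}{q^0\sqrt{J^0}}=\frac{P(f)}{q^0}+\Gamma(f).$$
The whole task is therefore to show that the Taylor expansion of $J_g-J^0$ around $J^0$ splits exactly into $P(f)\,\sqrt{J^0}$ (the part linear in $h$) plus $q^0\sqrt{J^0}\,\Gamma(f)$ (the genuinely nonlinear remainder).

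To produce the split I would plug in the expansion of Lemma \ref{LemmaTaylor}, $J_g-J^0=T_1+T_2+T_3+\tilde{\Gamma}$, and separate each $T_i$ into its part linear in $h$ and its higher-order remainder, using Lemma \ref{le10} to expand the macroscopic quantities $n_g$, $\u_g$ and $\alpha_g$ in terms of $h$. Three elementary algebraic identities do the heavy lifting. First, $\sqrt{1+x}-1-x/2=-x^2/[4(1+x/2+\sqrt{1+x})]$, applied with $x=\mathcal{D}_h$, splits $n_g-1=\sqrt{1+\mathcal{D}_h}-1$ into its linear part $n_h\sqrt{1+|\u_h|^2}$ plus exactly the tail $n_h^2/2-\mathcal{D}_h^2/[4(1+\mathcal{D}_h/2+\sqrt{1+\mathcal{D}_h})]$ appearing in the first term of $\Gamma(f)$. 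Second, $1/n_g-1=-\mathcal{D}_h/[(1+\sqrt{1+\mathcal{D}_h})\sqrt{1+\mathcal{D}_h}]$ handles the factor $1/n_g$ appearing both in $T_2$ and in the piece $n_h\alpha_h/n_g$ of $T_3$. Third, the algebraic identity $(1-\mathcal{D}_h/2)^2(1+\mathcal{D}_h)=1-3\mathcal{D}_h^2/4+\mathcal{D}_h^3/4$ reduces the remainder of $(n_g-1)/n_g$ (after subtracting its linear part $n_h\sqrt{1+|\u_h|^2}$) to the form $(\mathcal{D}_h^3-3\mathcal{D}_h^2)/[2(2+\mathcal{D}_h-\mathcal{D}_h^2+2\sqrt{1+\mathcal{D}_h})]+n_h^2/2$ entering the fourth term of $\Gamma(f)$.

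Matching the assembled linear part of $(T_1+T_2+T_3)/(q^0\sqrt{J^0})$ against $P(f)/q^0$ amounts to identifying the coefficients of the three linearly independent combinations $n_h\sqrt{1+|\u_h|^2}$, $n_h\alpha_h$ and $\beta_0 n_h\u_h\cdot\q$ as functions of $\q$. Writing $E:=\equis'(\alpha_0)(M'(\beta_0)/M(\beta_0)+q^0)$ for the factor appearing in $T_3$, this matching forces two identities, $\equis'(\alpha_0)=1/\kappa_0$ and $M'(\beta_0)/M(\beta_0)=-\Psi(\beta_0)$, which together give $E=(q^0-\Psi(\beta_0))/\kappa_0$; both are standard properties of the J\"uttner distribution, the former following from the chain rule applied to $\equis=(K_1/K_2)^{-1}$ together with $\Psi(\beta_0)=3/\beta_0+\alpha_0$ (equivalently $\kappa_0=\alpha_0\Psi(\beta_0)-1$), the latter by differentiating the defining integral \eqref{eme}. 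Once this is verified, the nonlinear remainders of $T_1$, $T_2$ and $T_3$ furnish, after division by $q^0\sqrt{J^0}$, the first four explicit terms of $\Gamma(f)$, while the Taylor remainder $\tilde{\Gamma}$ becomes the last term of $\Gamma(f)$ under the same normalization. The main obstacle is purely organizational bookkeeping: one must be methodical about treating separately the $n_h\alpha_h/n_g$ and $-\alpha_0(n_g-1)/n_g$ pieces of $T_3$ (which produce, respectively, the third and fourth terms of $\Gamma(f)$), and about carefully tracking the coefficient $(q^0-\Psi(\beta_0))/\kappa_0$ that arises from $E$ throughout.
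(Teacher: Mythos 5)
Your proposal is correct and follows essentially the same route as the paper's proof: reduce the claim to the pointwise identity $(J_g-J^0)/(q^0\sqrt{J^0})=P(f)/q^0+\Gamma(f)$, insert the Taylor expansion of Lemma~\ref{LemmaTaylor}, split each $T_i$ into its linear part plus a remainder via the same three elementary algebraic identities, and match the linear parts with $P(f)$ using $\equis'(\alpha_0)=1/\kappa_0$ and $M'(\beta_0)/M(\beta_0)=-\Psi(\beta_0)$. Incidentally, your identity $1/n_g-1=-\#/\bigl[\sqrt{1+\#}(1+\sqrt{1+\#})\bigr]$ carries the correct sign, whereas \eqref{tmp1} in the paper prints a plus; carrying your (correct) version through would place minus signs on the second and third terms of the displayed $\Gamma(f)$, so the discrepancy is a sign slip in the paper rather than a gap in your argument.
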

\begin{proof}
We first remark that if $g=J^0 + f \sqrt{J^0}$ verifies \eqref{BGKhtheorem}, then $f $ satisfies
$$
\partial_t f + \hat{q} \cdot \nabla_x f = \frac{1}{q^0
} \left( \frac{J_g-J^0}{\sqrt{J^0}} -f \right)
$$
and the proof consists in identifying the linear and the  nonlinear parts on the right hand side. Actually, we have to prove that
\[
 \frac{1}{q^0
} \left(  \frac{J_g-J^0}{\sqrt{J^0}} -f \right)   =  \frac{1}{q^0
} \left( P(f)-f\right) + \Gamma(f).
\]
In order to do that we use Lemma \ref{LemmaTaylor} to decompose $J_g - J^0$ into its linear and nonlinear parts. First, we use the following
equality
\[
\sqrt{1+\#}-1 =\frac{\#}{2}-\frac{\#^2}{4(1+\#/2+\sqrt{1+\#})},
\]
to decompose $T_1$ as
\begin{eqnarray*}
T_1&=& \left(\sqrt{1+\#}-1\right)J^0\\
&=&\left(  \frac{n_h^2 +2n_h \sqrt{1+|\u_h|^2}}{2} -  \frac{\#^2}{4(1+\# /2 + \sqrt{1+\#})}\right) {J^0}\nonumber
\\
&=& n_h \sqrt{1+|\u_h|^2}  J^0 + \left(  \frac{n_h^2}{2} -  \frac{\#^2}{4(1+\# /2 + \sqrt{1+\#})}\right) {J^0}:=L_1+\Gamma_1.
\end{eqnarray*}
Second, by using the equality
\begin{equation}
\frac{1}{\sqrt{1+\#}}=1+\frac{\#}{\sqrt{1+\#}(1+\sqrt{1+\#})}
\label{tmp1}
\end{equation}
we can write $T_2$ as
\begin{equation*}
T_2=\beta_0 n_h \u_h\cdot \q  \, J^0 +
\frac{\beta_0 \# }{\sqrt{1+\#}(1+\sqrt{1+\#})}\, n_h \u_h\cdot \q \, J^0:=L_2+\Gamma_2.
\end{equation*}
For the third term,  we first note that $M'(\beta)/M(\beta)=-\Psi(\beta)$, see \eqref{defpsi} in the Appendix, and also the fact that
$$
\equis'(\alpha)=\frac{1}{(K_1/K_2)'(\beta)}= \frac{1}{\frac{3}{\beta}\frac{K_1}{K_2}(\beta)+\left(\frac{K_1}{K_2}(\beta)\right)^2-1}= \frac{1}{\frac{3
\alpha}{\beta}+\alpha^2 -1},
$$
so that $\equis'(\alpha_0)=1/
\kappa_0
$. Then, $T_3$ can be rewritten as
\[
T_3=\frac{n_h \alpha_h}{\sqrt{1+\#}}
\frac{\Psi(\beta_0)-q^0
}{
\kappa_0
} J^0 +\alpha_0
\left( \frac{1}{\sqrt{1+\#}} -1 \right)
\frac{\Psi(\beta_0)-q^0
}{
\kappa_0
} J^0:=T_{31}+T_{32}.
\]
Using again (\ref{tmp1}), we can write $T_{31}$ as
\[
T_{31}=
\frac{\Psi(\beta_0)-q^0
}{
\kappa_0
}\,  n_h \alpha_h \, J^0+
\frac{\# }{\sqrt{1+\#}(1+\sqrt{1+\#})}\, n_h \alpha_h \, J^0:=L_3+\Gamma_3.
\]
Finally, using the equality
\[
\frac{1}{\sqrt{1+\#}}-1=-\frac{\#}{2}-\frac{\#^3-3\#^2}{2(2+\#-\#^2+2\sqrt{1+\#})},
\]
we can write $T_{32}$ as
\begin{eqnarray*}
T_{32}&=&\alpha_0 \Bigg(- \frac{n_h^2 +2n_h \sqrt{1+|\u_h|^2}}{2}
\\&& \hspace{0.8cm}-\frac{\#^3-3\#^2}{2(2+\#-\#^2+2\sqrt{1+\#})} \Bigg)
\frac{\Psi(\beta_0)-q^0
}{
\kappa_0
} J^0 \nonumber
\\
&=& - \alpha_0 \, \frac{\Psi(\beta_0)-q^0
}{
\kappa_0
} \, n_h \sqrt{1+|\u_h|^2} \, J^0 \nonumber \\
&& -\alpha_0 \left( \frac{n_h^2 }{2}+\frac{\#^3-3\#^2}{2(2+\#-\#^2+2\sqrt{1+\#})} \right)
\frac{\Psi(\beta_0)-q^0
}{
\kappa_0
} J^0:= L_4+\Gamma_4.
\end{eqnarray*}
We only have to remark that $ P(f)=(L_1+L_2+L_3+L_4)/\sqrt{J^0}$ and that $ \Gamma (f)= (\Gamma_1+\Gamma_2+\Gamma_3+\Gamma_4+
\tilde{\Gamma})/(q^0
 \sqrt{J^0})$. Just note that
$$
1 -\alpha_0 \frac{\Psi(\beta_0)-q^0
}{
\kappa_0
}=\frac{\alpha_0 q^0
-1}{
\kappa_0
}
$$
helps to deal with the factor of $u_h\sqrt{1+|\u_h|^2}$ in the sum of $L_1$ and $L_4$.
\end{proof}


\subsection{Analysis of the linearized operator: the projector over the distinguished space}
In this paragraph we determine some properties of operator $P$ that will be needed for future analysis.
We could consider the interplay of the projector with different scalar products. At least two of them come quickly into mind:
$$
 \langle f,\ g\rangle = \int_{\R^3} f(\q) \ g(\q) \ d\q \quad \mbox{and}\quad
 \langle f,\ g\rangle_{q_0} = \int_{\R^3} \frac{f(\q) \ g(\q) }{q_0}\ d\q.
$$
To proceed, let $N$ be the five dimensional space given by
$$N= span\{ \sqrt{J^0}, q^\mu \sqrt{J^0} \}.$$

\begin{lemma} The linear operator $P$ given in (\ref{operatorP}) is the projection from $L^2(\R^3)$ onto $N$ with respect to the scalar product $\langle \cdot ,\ \cdot\rangle_{q_0}$. In particular, it is self-adjoint with respect to that scalar product.
\label{lemaP11}
\end{lemma}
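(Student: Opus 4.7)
The plan is to verify directly that $P$ meets the two defining properties of the $\langle\cdot,\cdot\rangle_{q^0}$-orthogonal projection onto $N$: first, that $P(f)\in N$ for every $f$, which is immediate from the very form of expression \eqref{operatorP} (it is a linear combination of $\sqrt{J^0}$, $q^0\sqrt{J^0}$ and $q^i\sqrt{J^0}$); and second, that $f-P(f)$ is orthogonal to each of the five generators of $N$ with respect to $\langle\cdot,\cdot\rangle_{q^0}$. Self-adjointness then follows automatically from the fact that an orthogonal projection is self-adjoint for the scalar product defining the orthogonality.

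For the orthogonality check, I would first catalogue the moments of $J^0$ that will be needed. Since $J^0=J(1,\alpha_0,0;\q)$, the definitions of $\alpha_f$, of $n_f$ via \eqref{nefe} and \eqref{uefe}, and Lemma \ref{lema3}/\ref{lema4} give
\begin{equation*}
\int_{\R^3} J^0\,\frac{d\q}{q^0}=\alpha_0,\qquad \int_{\R^3} J^0\,d\q=1,\qquad \int_{\R^3} q^0 J^0\,d\q=\Psi(\beta_0),
\end{equation*}
while parity implies $\int_{\R^3} q^i J^0\,d\q=\int_{\R^3} q^i J^0 d\q/q^0=0$, and the isotropy of $J^0$ in the rest frame together with $T^{ij}=p\,\delta^{ij}=\delta^{ij}/\beta_0$ yields $\int_{\R^3} q^i q^j J^0\,d\q/q^0=\delta^{ij}/\beta_0$. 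I would also point out the key scalar products
\begin{equation*}
\langle f,\sqrt{J^0}\rangle_{q^0}=n_h\alpha_h,\qquad \langle f,q^0\sqrt{J^0}\rangle_{q^0}=n_h\sqrt{1+|\u_h|^2},\qquad \langle f,q^i\sqrt{J^0}\rangle_{q^0}=n_h(\u_h)_i,
\end{equation*}
which are precisely the linear functionals appearing in the coefficients $A_f$, $B_f$, $C_f$.

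With these ingredients, testing $\langle P(f),\phi\rangle_{q^0}$ against $\phi=\sqrt{J^0}, q^0\sqrt{J^0}, q^i\sqrt{J^0}$ reduces each identity to a small linear algebra check among $A_f$, $B_f$, $C_f$. For $\phi=q^i\sqrt{J^0}$ only the $C_f$-term survives and the factor $1/\beta_0$ from the moment cancels the $\beta_0$ in $C_f$, giving $n_h(\u_h)_i$ on the nose. For the $\sqrt{J^0}$ and $q^0\sqrt{J^0}$ tests, the $C_f$ contribution vanishes by parity and what must be shown is
\begin{equation*}
A_f\alpha_0+B_f=n_h\alpha_h,\qquad A_f+B_f\Psi(\beta_0)=n_h\sqrt{1+|\u_h|^2}.
\end{equation*}

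Substituting the explicit $A_f, B_f$ from \eqref{operatorP}, both identities collapse to the same single scalar identity
\begin{equation*}
\alpha_0\Psi(\beta_0)-1=\kappa_0=\frac{3\alpha_0}{\beta_0}+\alpha_0^2-1,
\end{equation*}
that is, $\Psi(\beta_0)=\alpha_0+3/\beta_0$. Recognising this thermodynamic identity is the only real step: it is the dimensionless version of the relation $e_{J^0}=3p_{J^0}+n_{J^0}\alpha_0$ that links the energy, pressure and the quantity $\alpha_0$ of a J\"uttner equilibrium, and it drops out of Lemma \ref{lema3}/\ref{lema4} in the Appendix (equivalently, of tracing the identity $\int q^i q^i J^0/q^0\,d\q = \int(q^0-1/q^0)J^0\,d\q$). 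Once this identity is in hand, the orthogonality of $f-P(f)$ to $N$ is established for every $f\in L^2(\R^3)$, which together with $P(f)\in N$ characterises $P$ as the $\langle\cdot,\cdot\rangle_{q^0}$-orthogonal projection onto $N$; self-adjointness is then automatic.
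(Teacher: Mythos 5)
Your proposal is correct, and the computations it rests on check out: the moments of $J^0$ you list follow from Lemmas \ref{restmoments} and \ref{lema3} with $n=1$, the three scalar products $\langle f,\phi\rangle_{q^0}$ against the generators are exactly $n_h\alpha_h$, $n_h\sqrt{1+|\u_h|^2}$ and $n_h\u_h$, and both orthogonality identities do collapse to $\alpha_0\Psi(\beta_0)-1=\kappa_0$, which is immediate from \eqref{defpsi} together with $\alpha_0=K_1(\beta_0)/K_2(\beta_0)$. Your route is, however, organized differently from the paper's. The paper first verifies that $P$ acts as the identity on the generators of $N$ (i.e.\ $P(\sqrt{J^0})=\sqrt{J^0}$, etc., using the same identity $1-\tfrac{\alpha_0}{\kappa_0}\Psi(\beta_0)=-\tfrac{1}{\kappa_0}$), then establishes self-adjointness \emph{directly} by writing $\langle P(f),g\rangle_{q_0}$ as a double integral with a manifestly symmetric kernel, and only then deduces $\langle f-P(f),g\rangle_{q_0}=0$ for $g\in N$ by combining the two. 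You instead verify range-in-$N$ plus orthogonality of the residual $f-P(f)$ to the generators for arbitrary $f$, and obtain self-adjointness as a free consequence of being an orthogonal projection. Since $P$ is self-adjoint the two sets of computations are essentially dual to one another (testing $P(f)$ against $\phi$ for all $f$ versus evaluating $P(\phi)$), so the arithmetic content is the same; what your packaging buys is that you never need to display the symmetric kernel, while the paper's explicit kernel is reused later in the proof of Lemma \ref{linea} to exhibit $K=P/q^0$ as a Hilbert--Schmidt operator, so it is not wasted effort there. Either organization is a complete proof of the lemma.
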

\begin{proof}
We first observe, from the definition  (\ref{operatorP}) of  $P(f)$, that it is a linear combination of the basis in $N$ and that
\begin{eqnarray*}
P( \sqrt{J^0})= \sqrt{J^0},\quad  P( q^i \sqrt{J^0})=  q^i \sqrt{J^0}, \quad P( q_0 \sqrt{J^0})=q_0 \sqrt{J^0},
\end{eqnarray*}
which is true by using a direct computation and the fact that
$1-\frac{\alpha_0}{
\kappa_0
}\Psi(\beta_0) =-\frac{1}{
\kappa_0
}$. Then $P$ acts on $N$ as the identity.

On the other hand, we can easily compute, for any $f$ and $g$, that
\begin{eqnarray*}
&&
\hspace{-0,5cm}\langle P( f),g \rangle_{q_0} \\&&= \int \!\!\! \int \!
\left(  \frac{\alpha_0}{
\kappa_0
}-  \frac{p_0+q_0}{
\kappa_0
 p_0q_0 } + \frac{\Psi(\beta_0)}{
\kappa_0
 p_0 q_0}  +\beta_0 \frac{\p\cdot \q}{p_0 q_0}\right)  f(\q) g(\p) \sqrt{J^0 (\q)}  \sqrt{J^0(\p)} d\q d\p
\end{eqnarray*}
which is obviously a symmetric expression and shows that $P$ is self-adjoint.

Finally, we write for any $f$ and any $g\in N$
\[
\langle f-P(f),g \rangle_{q_0} =\langle f,g \rangle_{q_0}-\langle P(f),g \rangle_{q_0}=\langle f,P(g) \rangle_{q_0}-\langle f,P(g) \rangle_{q_0}=0,
\]
which concludes the proof. \end{proof}

%
Let us define  $Lf=\frac{P(f)-f}{q^0}$ as the linearized relativistic BGK operator. We have the following result.
\begin{lemma}
\label{linea}
 The operator $L$ satisfies the following properties:
 \begin{enumerate}

 \item It is self-adjoint with respect to the scalar product $\langle \cdot ,\ \cdot \rangle$.

 \item $Ker (L)=N$.

 \item $L$ is non positive. In fact,
  \begin{equation}
  \nonumber
  \langle Lf,f \rangle =- \left\langle (I-P)f, (I-P)f \right\rangle_{q^0
} \leq 0.
 \end{equation}

\item
$L$ is  decomposed as $ L= -1/q^0 Id+K$, where $K$ is the
compact operator in $L^2(\R^3)$ given by $K=\frac{P}{q^0}$.
\end{enumerate}
\end{lemma}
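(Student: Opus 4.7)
The plan is to derive each of the four properties directly from the characterization of $P$ given in Lemma \ref{lemaP11}, using only the relation $Lf = (P(f)-f)/q^0$ and the factor-by-factor structure of $P$ displayed in \eqref{operatorP}.

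For property (1), I would first translate the $\langle\cdot,\cdot\rangle$-inner product of $Lf$ with $g$ into the weighted inner product: since $Lf = (P(f)-f)/q^0$, we have
\begin{equation*}
\langle Lf, g\rangle = \langle P(f), g\rangle_{q^0} - \langle f, g\rangle_{q^0}.
\end{equation*}
Now Lemma \ref{lemaP11} gives the self-adjointness of $P$ with respect to $\langle \cdot,\cdot\rangle_{q^0}$, so I can move $P$ to the right entry, and retracing the same identity backwards yields $\langle Lf, g\rangle = \langle f, Lg\rangle$. Property (2) then drops out from the fact that $P$ is a projection onto $N$: $Lf=0$ is equivalent to $P(f)=f$, and since $\mathrm{Range}(P)=N$, this is in turn equivalent to $f \in N$.

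For property (3), the argument rests on $P^2=P$ together with the self-adjointness of $P$ in $\langle\cdot,\cdot\rangle_{q^0}$. Writing
\begin{equation*}
\langle Lf, f \rangle = \langle P(f), f\rangle_{q^0} - \langle f, f\rangle_{q^0} = \langle P(f), P(f)\rangle_{q^0} - \langle f, f\rangle_{q^0},
\end{equation*}
where in the middle equality I use $\langle P(f),f\rangle_{q^0} = \langle f, P(f)\rangle_{q^0} = \langle P(f), P(f)\rangle_{q^0}$ after applying self-adjointness and idempotency. Expanding $\|(I-P)f\|_{q^0}^2 = \|f\|_{q^0}^2 - 2\langle f,P(f)\rangle_{q^0} + \|P(f)\|_{q^0}^2 = \|f\|_{q^0}^2 - \|P(f)\|_{q^0}^2$ produces exactly the claimed formula, whose non-positivity is then obvious.

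Property (4) is the one that actually needs an estimate; the decomposition itself is tautological, $Lf = -f/q^0 + P(f)/q^0$, so the issue is the compactness of $K=P/q^0$. The key observation is that $P$ is finite-rank: inspecting \eqref{operatorP}, the coefficients $A_f$, $B_f$ and the three components of $C_f$ are linear functionals of $f$ (because $n_h \sqrt{1+|\u_h|^2}$, $n_h\alpha_h$ and $n_h\u_h$ are each linear in $h=f\sqrt{J^0}$ by Remark \ref{negativo}), so $K(f)$ always lies in the five-dimensional subspace spanned by $\sqrt{J^0}/q^0$, $\sqrt{J^0}$ and $q^i \sqrt{J^0}/q^0$, $i=1,2,3$. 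These five functions belong to $L^2(\R^3)$ because $J^0$ decays exponentially and $|\q|/q^0 \le 1$. It then only remains to verify that the functionals $f\mapsto n_h\sqrt{1+|\u_h|^2}$, $f\mapsto n_h \alpha_h$, $f\mapsto n_h \u_h$ are continuous on $L^2(\R^3)$, which follows at once from the Cauchy–Schwarz inequality since each of them is of the form $\int f\, \varphi\, d\q$ with $\varphi \in L^2(\R^3)$ (namely $\sqrt{J^0}$, $\sqrt{J^0}/q^0$ and $\q\sqrt{J^0}/q^0$ respectively). A bounded finite-rank operator is compact, completing the proof.

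The only step that requires any genuine care is the compactness in (4), and even there the main observation is that the apparent nonlinearity of $P$ through $n_h$ is illusory once one pairs the factors as in \eqref{operatorP}; the rest is bookkeeping based on Lemma \ref{lemaP11}.
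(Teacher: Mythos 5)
Your proof is correct and follows essentially the same route as the paper: every item is reduced to Lemma \ref{lemaP11}, with $Ker(L)=N$ read off from $Lf=0\Leftrightarrow P(f)=f$, the non-positivity obtained from the projection identity $\langle P(f),f\rangle_{q^0}=\langle P(f),P(f)\rangle_{q^0}$, and the compactness of $K=P/q^0$ from the finite rank of $P$. Your explicit check that the coefficient functionals $A_f$, $B_f$, $C_f$ are bounded on $L^2(\R^3)$ via Cauchy--Schwarz is a small but worthwhile addition, since finite rank alone does not yield compactness without boundedness (the paper supplies this implicitly through its Hilbert--Schmidt kernel representation).
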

\begin{proof}
By definition, $Lf=0$  is equivalent to $f=P(f)$, so clearly $f\in R(P)=N$. The self-adjointness of the linear operator $L$ is
 easily obtained computing from the definition (\ref{operatorP}) that
 \begin{eqnarray*}
 \langle Lf,g \rangle&=&\int_{\R^3} \int_{\R^3}
 \left(  \frac{\alpha_0}{
\kappa_0
}-  \frac{1}{
\kappa_0
 p_0} - \frac{1}{
\kappa_0
 q^0
} + \frac{\Psi(\beta_0)}{
\kappa_0
 p_0 q^0
}  +\beta_0 \frac{\p\cdot \q}{p_0 q^0
}\right)
\\
 &&\qquad \times f(\q) g(\p) \sqrt{J^0 (\q)}  \sqrt{J^0(\q)} d\q d\p
-\int_{\R^3} \frac{g (\q) f(\q)}{q^0
} d\q.
\end{eqnarray*}
To prove \emph{3} we first notice that
$$ \langle Lf,f \rangle = -\Big \langle \frac{P(f)-f}{q^0}, P(f)-f
\Big\rangle +\Big\langle \frac{P(f)}{q^0}, P(f) \Big\rangle -\Big\langle
\frac{P(f)}{q^0}, f \Big\rangle.
$$
Therefore, using Lemma \ref{lemaP11} we observe that the last two terms cancel and this completes the proof.

The decomposition of the linear operator $L$ is trivial. Note that the operator $K=P/q^0$ is compact because its range $R(P)=N$ has finite dimension. Moreover, for future development, let us write it in the Hilbert-Schmidt form as follows
$$
K(f)= \int_{\R^3} k(\q,\q_1)f(\q_1)\, d\q_1,
$$
where the kernel $ k$ is given by
\begin{eqnarray*}
k(\q,\q_1)= \frac{\sqrt{J^0}(\q)\sqrt{J^0}(\q_1)}{q_0}\bigg\{ \left(1 - \frac{\alpha_0(\Psi(\beta_0)-q_0)}{\kappa_0}\right)
\hspace{1.8cm}\\+ \left( \frac{\Psi(\beta_0)-q_0}{\kappa_0
\sqrt{1+q_1^2}}\right)+\beta_0 \frac{\q \cdot \q_1}{\sqrt{1+q_1^2}}\bigg\}.
\end{eqnarray*}
Thus, $K$ defines a Hilbert-Schmidt operator in $L^2(\R_{\q}^3)$, as it is easy to see that $k(\q,
\q_1)\in L^2(\R^3\times
\R^3)$.
\end{proof}

\section{Existence of solutions to the linearized Relativistic
BGK equation}
\label{existencelinearization}

In this section we will show that the initial value
problem for the linearized equation
\begin{equation}
\label{eqlineal}
\partial_t f + \hat{q} \cdot \nabla_x f = Lf
\end{equation}
 has a unique weak solution in $L^2(\R_x^3 \times \R_{\q}^3)$ which is global in time.  This is done by means of the semigroup
representation of the solution, analogous to that given by \cite{EP,N1} for the non-relativistic Boltzmann equation.

To proceed we need to introduce some notation.
Let $l\geq 0$ and let $H^l(x)$ denote the Sobolev space of
$L^2(\R_x^3)$ functions, whose derivatives up to order
$l$ belong to $L^2(\R_x^3)$. We shall denote the partial
Fourier transform in
$x$ of $f\in L^2(\R_x^3 \times \R_{\q}^3)$ as follows:
$$
\hat{f}(\zeta,\q)= (2\pi)^{-\frac{3}{2}}\int_{\R^3} \exp(-i
\zeta x) f(x,\q) \, dx.
$$
Then, we set  $\hat{H}^l (\zeta)$ as the image under Fourier transform of the space
$H^{l}(x)$, with the following norm:
$$
\left\| f\right\|_{\hat{H}^l(\zeta)} = \left\| (1+\mid \zeta
\mid)^{\frac{l}{2}} \hat{f}(\zeta)\right\|_{L^2(\zeta)}=
\left\| f(x)\right\|_{H^l(x)}.
$$
 Finally, let $H_l$ be the Hilbert space $L^2(\R_{\q}^3, H^l(x))$ with the
norm given by
$$
 \left\| f\right\|_l = \sqrt{\int_{\R^3} \left\| f(\cdot,\q)\right\|^2_{H^l(x)}d\q}\ .
 $$
We define the operator $B$ acting on $H_l$ as
$$B= L- \frac{\q}{q^0
} \nabla_x \ . $$
Being $L$ a perturbation of the compact operator $K$, then its domain is given by
$$
D(B)=\left\{f\in H_l / \frac{\q}{q^0
} \nabla_x f +
\frac{1}{q^0
}f \in H_l \right\}.
$$

Now the equation \eqref{eqlineal} can be
written down as  $\frac{\partial f}{\partial t}= Bf$. Given $f \in H_l$, we can take the Fourier transform with respect to the spatial variable in the above equation, thus obtaining
$$
\frac{\partial \hat{f}}{\partial t}= \hat{B}\hat{f},
$$
where
$$
\hat{B}= L-i \frac{\zeta \cdot \q}{q^0
}.
$$
Following now the standard
techniques for the non-relativistic case \cite{N2,UK}, an important
property of $\hat{B}$ is obtained. Namely,

\begin{theorem} For each $\zeta \in \R^3$, the operator $\hat{B}$
generates a strongly continuous contraction semigroup on
$L^2(\R^3_{\q})$ such that, for any $f\in L^2(\R^3_{\q})$, one has
$$\left\| \exp(t\hat{B}) f(\q)\right\|_{L^2(\R^3_{\q})} \leq \left\|
f(\q)\right\|_{L^2(\R^3_{\q})}, \quad t\geq 0.$$
\end{theorem}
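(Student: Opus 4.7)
The plan is to apply the standard fact that a bounded dissipative operator on a Hilbert space generates a strongly continuous contraction semigroup (a degenerate case of the Lumer--Phillips theorem). The preliminary observation I would establish first is that for each fixed $\zeta\in\R^3$ the operator $\hat{B}$ is in fact \emph{bounded} on $L^2(\R^3_\q)$. Indeed, writing $\hat{B} = -I/q^0 + P/q^0 - i(\zeta\cdot\q)/q^0$, the multiplicative parts are bounded because $1/q^0 \le 1$ and $|\q|/q^0 \le 1$, while the projector $P$ has five-dimensional range $N$ and is $L^2$-bounded (as seen from its explicit form \eqref{operatorP} by Cauchy--Schwarz, noting that the linear functionals $h\mapsto n_h\alpha_h$, $h\mapsto n_h\sqrt{1+|\u_h|^2}$ and $h\mapsto n_h\u_h$ are integrals of $f$ against the square-integrable weights $\sqrt{J^0}/q^0$, $\sqrt{J^0}$ and $\q\sqrt{J^0}/q^0$). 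The bounded operator $\hat{B}$ therefore automatically generates the norm-continuous semigroup $\exp(t\hat{B})=\sum_{n\ge 0} t^n \hat{B}^n/n!$, and the only nontrivial point left is the contraction estimate.

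The next step is to verify dissipativity in the complex Hilbert space sense, namely $\mathrm{Re}\langle \hat{B}f,f\rangle \le 0$. The imaginary-multiplier piece contributes
\[
-\mathrm{Re}\!\left(i\int_{\R^3}\frac{\zeta\cdot\q}{q^0}\,|f(\q)|^2\,d\q\right) = 0,
\]
since the integral is real. For the $L$ piece, I would exploit that $L$ acts by a real integral kernel, so decomposing $f=f_R+if_I$ and applying Lemma \ref{linea}(3) componentwise yields
\[
\mathrm{Re}\langle Lf,f\rangle = -\langle (I-P)f_R,(I-P)f_R\rangle_{q^0} - \langle (I-P)f_I,(I-P)f_I\rangle_{q^0}\le 0.
\]
Adding the two contributions gives $\mathrm{Re}\langle\hat{B}f,f\rangle\le 0$.

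The contraction property then follows from the standard Hilbert-space calculation
\[
\frac{d}{dt}\|\exp(t\hat{B})f\|_{L^2}^2 = 2\,\mathrm{Re}\langle \hat{B}\exp(t\hat{B})f,\,\exp(t\hat{B})f\rangle \le 0,
\]
integrated from $0$ to $t$, which yields $\|\exp(t\hat{B})f\|_{L^2}\le \|f\|_{L^2}$. The main obstacle, if one wishes to name one, is pure bookkeeping: making the $L^2$-boundedness of $P$ fully explicit from \eqref{operatorP} and checking that the real-$L^2$ dissipativity of Lemma \ref{linea}(3) transfers cleanly to complex-valued $f$. Both are immediate once one notices that $P$ factors through linear functionals of $f$ with square-integrable weights involving $\sqrt{J^0}$, and that a purely imaginary multiplication operator is skew-Hermitian and therefore does not affect dissipativity.
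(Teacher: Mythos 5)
Your proof is correct and is essentially the argument the paper intends: the paper gives no explicit proof of this theorem, invoking only the ``standard techniques for the non-relativistic case'' of Nishida and Ukai, and those techniques amount precisely to the bounded-plus-dissipative (Lumer--Phillips) computation you carry out, with the dissipativity of $L$ supplied by Lemma \ref{linea} and the multiplier $-i\zeta\cdot\q/q^0$ contributing nothing to the real part. The two details you flag --- the $L^2$-boundedness of the finite-rank projector $P$ from \eqref{operatorP} and the transfer of the real-kernel dissipativity to complex-valued $f$ --- are exactly the right ones to make explicit, and you handle both correctly.
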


The operator $\hat{B}$ can be used to construct an explicit
representation of the semigroup $\exp(t\hat{B}) $ for $f\in H_l$.
Using the same argument as in the non-relativistic case
\cite{EP,N2}, one can show that the operator $B$
generates a strongly continuous contraction semigroup on $H_l$. It
is also shown that such semigroup is given explicitly as:
$$
\exp(tB)f(x,\q)= (2\pi)^{-\frac{3}{2}}\int_{\R^3} \exp(i
\zeta x)\, \exp(t\hat{B})\hat{f}(\zeta,\q)\ d\zeta .
$$
 Moreover, the
following estimate holds for any $t\geq0$:
$$
\left\|
\exp(tB)f(x,\q)\right\|_l \leq \left\| f(x,\q)\right\|_l.
$$
Then we can state finally the following result:
\begin{theorem}
Let $f_0 \in H_l$ with $l\geq 0$. Then, there exists a unique $f(t,x,\q)$ global in time solution of (\ref{eqlineal}), satisfying $f(t)\in H_l$ and
$$\left\|
f(t)\right\|_l \leq \left\| f_0\right\|_l, \quad \forall \ t\geq 0.
$$
\end{theorem}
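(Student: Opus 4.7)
The plan is to leverage the semigroup machinery already assembled. The previous theorem asserts that $B = L - \frac{\q}{q^{0}} \cdot \nabla_x$ generates a strongly continuous contraction semigroup $\{\exp(tB)\}_{t\ge 0}$ on $H_l$, with the explicit Fourier representation
$$
\exp(tB)\,f_0(x,\q) = (2\pi)^{-3/2} \int_{\R^3} \exp(i\zeta\cdot x)\,\exp(t\hat{B})\,\hat{f}_0(\zeta,\q)\, d\zeta,
$$
and the uniform bound $\|\exp(tB)f_0\|_l \le \|f_0\|_l$. Thus I would define $f(t) := \exp(tB)\,f_0$ as the candidate solution and then prove that this is the unique solution.

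For existence, the standard theory of $C_0$-semigroups (Hille–Yosida) tells us that $t\mapsto f(t)$ is continuous into $H_l$, takes values in $D(B)$ for every $t>0$ when $f_0\in D(B)$, and satisfies $\frac{d}{dt}f(t) = Bf(t)$ in the classical sense on $D(B)$; for general $f_0 \in H_l$ one obtains a mild solution, which is a weak solution of \eqref{eqlineal} against test functions in $H_l$. In either setting, the identity $f(t) = f_0 + \int_0^t B f(s)\,ds$ (respectively its weak form) is immediate from the semigroup construction. The contraction bound $\|f(t)\|_l \leq \|f_0\|_l$ for all $t\ge 0$ is simply a restatement of the contraction property of the semigroup.

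For uniqueness, I would argue in the standard way: if $g(t)$ is another (weak) solution in $C([0,T];H_l)$ with $g(0)=f_0$, then by taking Fourier transforms in $x$ (which is an isometry on each fiber) the function $\hat g(t,\zeta,\cdot)$ solves the ODE $\partial_t \hat g = \hat B \hat g$ in $L^2(\R^3_{\q})$ with initial datum $\hat f_0(\zeta,\cdot)$, for a.e.\ $\zeta$. Since $\hat B$ generates a $C_0$-semigroup on $L^2(\R^3_{\q})$, uniqueness of the associated Cauchy problem forces $\hat g(t,\zeta,\cdot) = \exp(t\hat B)\hat f_0(\zeta,\cdot)$ a.e., and inverting the Fourier transform gives $g = f$ in $H_l$.

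The main obstacle, in my view, is not existence --- which is essentially automatic from the preceding semigroup theorem --- but the correct identification of the notion of solution and the uniqueness argument when $f_0 \notin D(B)$. One must either work with mild solutions and invoke uniqueness at the Fourier level as above, or approximate $f_0$ by a sequence in $D(B)$ and pass to the limit using the contraction bound; either route is routine but requires care in specifying the functional framework in which solutions are compared. Once this is settled, the estimate $\|f(t)\|_l \le \|f_0\|_l$ is inherited without further work from the semigroup contraction.
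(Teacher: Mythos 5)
Your proposal is correct and takes essentially the same route as the paper, which obtains this theorem as an immediate consequence of the preceding assertion that $B=L-\frac{\q}{q^0}\cdot\nabla_x$ generates a strongly continuous contraction semigroup on $H_l$ (the paper states no separate proof beyond that discussion). Your additional care in distinguishing mild from classical solutions and in running the uniqueness argument at the Fourier level only makes explicit what the paper leaves implicit.
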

%

\setcounter{section}{0}%
\setcounter{subsection}{0}%
\renewcommand\thesection{\Alph{section}}

\section{Appendix}

 The aim of this Appendix is to compute some of the various quantities involved in this paper in order to make it  easier  to follow. For the sake of simplicity,  it is enough to perform these computations for the dimensionless quantities as in Sections \ref{displaylinearization} and \ref{existencelinearization}, assuming the scaling \eqref{normalscale} or, equivalently, to assume $m=c=\omega=\eta=1$.

\subsection{Lorentz invariance} {\ }

Let $\Lambda$ be a Lorentz boost (i.e. a linear isometry with respect to the Minkowsky metric) in $\R_q^4$. As we are dealing with particles of unit rest mass (the so-called mass shell condition), this transformation $\Lambda$ can be meaningfully seen as acting on $\R_{\q}^3$.
 Given any distribution function $f$, we can define a new distribution function $f_\Lambda$ by means of
$$
  f_\Lambda(t,x,\q) = f(t,x,\Lambda \q).
$$
As
\begin{equation}
   v_\mu z^\mu =( \Lambda v)_\mu (\Lambda z)^\mu \quad \mbox{for any}\ v,\ z\in \R^4,
\label{scalarprod}
\end{equation}
we can check directly that $n, e, p$ and $\sigma$ are Lorentz invariant, i.e. $n_f = n_{f_\Lambda}$ and so on, for any Lorentz boost $\Lambda$. Moreover, making use of the fact that the ratio $\frac{d\q}{
q^0
}$ is invariant under the action of $\Lambda$, we see that
\begin{eqnarray}
\nonumber
  \int_{\R^3} q^\mu f_\Lambda(t,x,\q) \frac{d\q}{
q^0
} &=&  \int_{\R^3} (\Lambda^{-1} q^\mu) f(t,x,\q) \frac{d\q}{ q^0
}
\\
\nonumber
&=& \Lambda^{-1}\left(\int_{\R^3} q^\mu f(t,x,\q) \frac{d\q}{ q^0
}\right).
\end{eqnarray}
This means that
\begin{equation}
\nonumber
 u_{f_\Lambda} = \Lambda^{-1}u_f
\end{equation}
(e.g. macroscopic boosts on the local velocity of the system are uniquely determined by the action of the same boosts --in a contravariant way-- on the microscopic local velocities of the gas). The Lorentz invariance of the volume element $\frac{d\q}{ q^0}$ shows also that $\beta$ is Lorentz invariant, as the ratio
$$
\frac{K_1(\beta)}{K_2(\beta)} = \frac{1}{n_f} \displaystyle \int_{\R^3}  f \frac{d\q}{q^0}
$$
is Lorentz invariant too. We summarize these facts as:
\begin{lemma}
\label{lorentz}
Given any distribution function $f$, the scalar quantities $n_{f_\Lambda}$, $e_{f_\Lambda}$, $p_{f_\Lambda}$, $\sigma_{f_\Lambda}$ and $\beta_{f_\Lambda}$ are Lorentz invariant. The vector $u_f$ transforms according to $u_{f_\Lambda} = \Lambda^{-1}u_f$.
\end{lemma}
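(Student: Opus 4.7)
The plan is to establish everything by tracking how the basic tensorial moments $N^\mu$ and $T^{\mu\nu}$ transform under a Lorentz boost $\Lambda$, and then reading off the transformation of all scalars built out of them via full contractions.

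First I would show that $N^\mu$ transforms contravariantly, i.e.\ $N^\mu_{f_\Lambda} = (\Lambda^{-1} N_f)^\mu$, and that $T^{\mu\nu}$ transforms as a rank-$2$ tensor in the analogous way. The key ingredients are the invariance of the mass-shell measure $d\q/q^0$ under $\Lambda$ (which has already been highlighted in the discussion preceding the statement) together with the change of variable $\q \mapsto \Lambda \q$ in the integrals defining $N^\mu$ and $T^{\mu\nu}$. Once these tensor transformation laws are in place, the Lorentz invariance of $n_f = \sqrt{N^\mu N_\mu}$ is immediate from \eqref{scalarprod}, since $N^\mu N_\mu$ is a full contraction.

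Having $n_{f_\Lambda} = n_f$, the relation $n_f u_f^\mu = N^\mu$ from \eqref{uefe} combined with the contravariant transformation of $N^\mu$ yields $u_{f_\Lambda}^\mu = (\Lambda^{-1} u_f)^\mu$, as claimed. Then $e_f = (u_f)_\mu (u_f)_\nu T^{\mu\nu}$ and $p_f = \tfrac{1}{3}\bigl((u_f)_\mu (u_f)_\nu - g_{\mu\nu}\bigr) T^{\mu\nu}$ are full contractions of tensors whose transformations cancel by \eqref{scalarprod} (for the $g_{\mu\nu}$ piece one uses that $g$ is preserved by any Lorentz boost). The entropy scalar $\sigma_f = S^\mu (u_f)_\mu$ is handled by the same argument after checking that $S^\mu$ transforms as a four-vector, which follows again from invariance of $d\q/q^0$ together with the fact that the integrand $f \ln(f/\eta)$ is pointwise invariant under the change of variable $\q \mapsto \Lambda \q$.

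Finally, for $\beta_f$ I would use its characterization \eqref{betadef}: since $K_1/K_2$ is a fixed scalar-valued bijection (by the monotonicity used in Lemma \ref{Bellouq}), it is enough to show that the right-hand side $\frac{1}{n_f}\int f \frac{d\q}{q^0}$ is Lorentz invariant. But $n_f$ is already known to be invariant and $\int f \frac{d\q}{q^0}$ is invariant by a direct change of variables using the invariance of $d\q/q^0$ and the pointwise identity $f_\Lambda(\q) = f(\Lambda \q)$. Hence $\beta_{f_\Lambda}=\beta_f$. The only delicate point, and the one deserving a careful line, is that although $\Lambda$ is a priori a linear isometry of $\R^4$, the mass-shell condition $q^\mu q_\mu = 1$ lets us view $\Lambda$ unambiguously as a transformation of $\R^3_\q$; this is what legitimates the change-of-variables steps above and is the only nontrivial bookkeeping in the proof.
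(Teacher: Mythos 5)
Your proposal is correct and follows essentially the same route as the paper: establish the contravariant transformation of the moments $N^\mu$, $T^{\mu\nu}$ (and $S^\mu$) from the invariance of $d\q/q^0$, deduce $u_{f_\Lambda}=\Lambda^{-1}u_f$, read off the invariance of the scalars as full contractions via \eqref{scalarprod}, and get $\beta_{f_\Lambda}=\beta_f$ from the invariance of $n_f$ and of $\int f\,d\q/q^0$ in the characterization \eqref{betadef}. The paper states these steps more tersely but the argument is the same.
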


It is instructive to consider the special case in which the distribution function induces a local velocity that is found to be zero; that is, the physical objects that we are representing are at rest with respect to the reference frame that we use to describe them. This situation corresponds to distributions $f$ having $u_f = (1,0,0,0)$ --Lorentz rest frame. It is useful to display formulas for the macroscopic quantities of the gas in this case, as the computations are simpler than in the general case and the results can be related to a generic distribution by means of Lorentz boosts.
These read now:
\begin{equation}
\nonumber
n_f= N^0 = \int_{\R^3} f \ d\q,
\end{equation}
\begin{equation}
\label{efacil}
e_f =  T^{00}= \int_{\R^3} \sqrt{1 + |\q|^2} f \ d\q,
\end{equation}
\begin{equation}
\label{pfacil}
p_f = \frac{1}{3} [T^{11}+T^{22}+T^{33}]= \frac{1}{3}\int_{\R^3} |\q|^2  f \ \frac{d\q}{q^0
}.
\end{equation}

\subsection{Computation of the moments of the J\"uttner equilibrium}

We will need a more precise information about the moments of the relativistic Maxwellian. First we list for convenience some of them that can be easily computed in the Lorentz rest frame. Notice that in this case  the J\"uttner equilibrium reduces to
$$
J(n,\beta,0;\q)= \frac{n}{M(\beta)} \exp \{- \beta \sqrt{1 + |\q|^2} \}.
$$
For future reference we point that,
using modified Bessel functions for the non-negative integer number $j$
$$
K_j(\beta) =  \int_0^\infty \cosh(jr)\exp\{-\beta
\cosh(r)\}dr,
$$
we can simplify some of the related formulae. For instance, we
 can write the function $M(\beta)$ given in \eqref{eme} as
\begin{equation}
\label{diecinueve}
M(\beta)=  \frac{4\pi}{ \beta} K_2(\beta).
\end{equation}
To simplify the  notation we will introduce the function $\Psi$ defined as follows
\begin{equation}
\label{defpsi}
 \Psi(\beta)= \frac{3}{\beta} + \frac{K_1(\beta)}{K_2(\beta)}.
 \end{equation}
\begin{lemma}
\label{restmoments}
Let $J=J(n,\beta,0;\q)$. Then, the following equalities are verified:
\begin{enumerate}
\item
$\displaystyle
\int_{\R^3} J \ d\q = n,
$
\item
$\displaystyle
\int_{\R^3} q^i J \ d\q = \int_{\R^3} q^i J \frac{d\q}{q^0
} = 0,
$
\item
$\displaystyle
\int_{\R^3} |\q|^2  J \ \frac{d\q}{q^0
} = \frac{3 n}{\beta},
$
\item
$\displaystyle
\int_{\R^3} J \frac{d\q}{q^0
} = \frac{n}{M(\beta)} \frac{4 \pi}{\beta} K_1(\beta)= n \frac{K_1(\beta)}{K_2(\beta)},
$
\item
$\displaystyle
\int_{\R^3} \sqrt{1 + |\q|^2} J \ d\q = n \Psi(\beta).
$
\end{enumerate}
\end{lemma}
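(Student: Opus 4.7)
The plan is to split the five identities into three categories of difficulty. Items 1 and 2 are immediate: item 1 is just the normalization condition used to \emph{define} $M(\beta)$ in \eqref{eme}, and item 2 follows from the fact that the integrand is odd in $q^i$ (while the measure $d\q$ and $d\q/q^0$ are even), so the integrals vanish by the $\q \to -\q$ symmetry. This disposes of both identities without any computation.

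For items 4 and 5, my approach is to reduce the three-dimensional integrals over $\R^3$ to one-dimensional integrals of the form appearing in the definition of the modified Bessel functions $K_j(\beta)$. Passing to spherical coordinates, any integral of a radial function $\phi(|\q|)$ reads $4\pi\int_0^\infty \phi(\rho)\rho^2\,d\rho$; the natural substitution is then $\rho=\sinh(r)$, which gives $\sqrt{1+\rho^2}=\cosh(r)$, $d\rho=\cosh(r)\,dr$, so $\rho^2\,d\rho = \sinh^2(r)\cosh(r)\,dr$ and $\rho^2\,d\rho/\sqrt{1+\rho^2} = \sinh^2(r)\,dr$. With this substitution,
\[
\int_{\R^3} J\,\frac{d\q}{q^0} = \frac{4\pi n}{M(\beta)}\int_0^\infty \sinh^2(r)\,e^{-\beta\cosh(r)}\,dr,
\]
and using $\sinh^2(r)=\tfrac12(\cosh(2r)-1)$ together with the definition of $K_j$ we get $2\pi(K_2(\beta)-K_0(\beta))/\beta$-style expressions. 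The classical recurrence $K_{j+1}(\beta)-K_{j-1}(\beta)=\tfrac{2j}{\beta}K_j(\beta)$ applied at $j=1$ gives $K_2-K_0=\tfrac{2}{\beta}K_1$, which converts the answer into $\tfrac{4\pi}{\beta}K_1(\beta)$ and, combined with \eqref{diecinueve}, yields item 4.

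For item 5, rather than computing a 1D integral, I would use the differentiation trick
\[
\int_{\R^3}\sqrt{1+|\q|^2}\,J\,d\q = -\frac{n}{M(\beta)}\,\frac{dM}{d\beta}(\beta).
\]
Using \eqref{diecinueve} and the Bessel recurrence $K_j'(\beta) = -K_{j-1}(\beta) - \tfrac{j}{\beta}K_j(\beta)$ at $j=2$ gives $M'(\beta) = -\tfrac{3}{\beta}M(\beta) - \tfrac{4\pi}{\beta}K_1(\beta)$, whence $-M'/M = 3/\beta + K_1/K_2 = \Psi(\beta)$, yielding item 5. The main technical hurdle is simply keeping the Bessel-function bookkeeping straight; I would state the two recurrence identities once at the start and apply them mechanically.

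Finally, item 3 drops out algebraically from items 1, 4, and 5 by writing $|\q|^2 = (1+|\q|^2) - 1 = (q^0)^2 - 1$, so that
\[
\int_{\R^3} |\q|^2\,J\,\frac{d\q}{q^0} = \int_{\R^3} q^0 J\,d\q - \int_{\R^3} J\,\frac{d\q}{q^0} = n\Psi(\beta) - n\frac{K_1(\beta)}{K_2(\beta)} = \frac{3n}{\beta},
\]
using the definition \eqref{defpsi} of $\Psi$. This avoids having to evaluate a quartic-in-$\sinh$ integral directly and completes the proof.
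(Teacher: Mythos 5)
Your proposal is correct, but it routes the logical dependencies differently from the paper. Items 1 and 2 are handled identically (normalization and oddness), and your derivation of item 4 via the substitution $\rho=\sinh(r)$, the half-angle identity and the recurrence $K_2-K_0=\tfrac{2}{\beta}K_1$ is just an explicit version of the Bessel identity the paper invokes. The divergence is in items 3 and 5: the paper proves item 3 \emph{directly} by an integration by parts in the radial variable (differentiating $e^{-\beta\sqrt{1+r^2}}$ to trade $r^4/\sqrt{1+r^2}$ for $3r^2/\beta$) and then obtains item 5 as the \emph{sum} of items 3 and 4, whereas you prove item 5 directly by differentiating $M(\beta)$ under the integral sign together with the recurrence $K_2'=-K_1-\tfrac{2}{\beta}K_2$, and then recover item 3 as the \emph{difference} of items 5 and 4 via $|\q|^2=(q^0)^2-1$. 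Both arguments are sound and of comparable length; your $-M'/M=\Psi$ computation has the advantage of making the thermodynamic meaning of $\Psi$ transparent and of reusing the external Bessel recurrences uniformly, while the paper's integration by parts is more self-contained (it needs no derivative formula for $K_j$, only the recurrence already used for item 4). Incidentally, the algebraic identity you use to close item 3 is the same one the paper itself deploys later, in the proof of its Lemma on general J\"uttner moments, so your approach is entirely in the spirit of the text. The only blemish is the loose phrase about ``$2\pi(K_2(\beta)-K_0(\beta))/\beta$-style expressions'': the factor $1/\beta$ does not appear until after the recurrence is applied, but this does not affect the final, correct answer.
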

\begin{proof}
The first relation follows from the very definition of $M(\beta)$, and the second one just by a symmetry argument. To obtain the third one, we use integration by parts:
\begin{eqnarray}
\nonumber
  \int_{\R^3}  |\q|^2 J \  \frac{d\q}{q^0
} &&=  4 \pi \int_0^\infty \frac{r^4}{\sqrt{1 + r^2}} J \, dr = \frac{4 \pi n}{M(\beta)}\int_0^\infty \frac{r^4}{\sqrt{1 + r^2}} e^{-\beta \sqrt{1+ r^2}} \, dr \nonumber
\\
&&\hspace{-1 cm} = -\frac{4 \pi n}{\beta M(\beta)} \int_0^\infty \frac{d}{dr}\left(e^{-\beta \sqrt{1+r^2}} \right) r^3 \, dr = \frac{12 \pi n}{\beta M(\beta)} \int_0^\infty r^2 e^{-\beta \sqrt{1+ r^2}} \, dr
\nonumber
\\
&&\hspace{-1 cm} = \frac{3 n}{\beta M(\beta)} \int_{\R^3} e^{-\beta \sqrt{1 + |\q|^2}}\, dq = \frac{3n}{\beta}.
\nonumber
\end{eqnarray}
The fourth relation is a consequence of the following identity
$$
   \int_{\R^3} \frac{e^{-\beta \sqrt{1 + |\q|^2}}}{\sqrt{1 + |\q|^2}}\, d\q = 4 \pi \int_0^\infty r^2 \frac{e^{-\beta \sqrt{1 + r^2}}}{\sqrt{1 + r^2}}\, dr = \frac{4 \pi}{\beta}K_1(\beta).
$$
The sum of the third and fourth relations yields the fifth one by using \eqref{defpsi}.
\end{proof}

The moments of the J\"uttner distribution in general form can be obtained thanks to the following classical decomposition (see \cite{LandauFluid} for instance):
\begin{lemma}
\label{lema4}
The energy momentum tensor $T^{\mu
\nu}$ can be expressed as:
\begin{equation}
\nonumber
 T^{\mu \nu}= \int_{\R^3} q^\mu q^\nu f \frac{d\q}{ q^0
}=-p_{f}
g^{\mu \nu}+ (e_{f}+p_{f})u^\mu u^\nu.
\end{equation}
\end{lemma}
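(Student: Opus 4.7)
The plan is to exploit Lorentz invariance of the tensor identity: both sides of the claimed equality transform as contravariant two-tensors under a Lorentz boost $\Lambda$. Indeed, the left-hand side does because $q^\mu q^\nu$ is a tensor and $d\q/q^0$ is a Lorentz-invariant measure (Lemma \ref{lorentz}); the right-hand side does because $g^{\mu\nu}$ is a tensor, $u^\mu u^\nu$ is, and $p_f$, $e_f$ are Lorentz scalars by Lemma \ref{lorentz}. Consequently, it suffices to verify the identity in one distinguished frame for each J\"uttner distribution, and the natural choice is the Lorentz rest frame in which $u^\mu=(1,0,0,0)$, i.e.\ the frame in which $J=J(n,\beta,0;\q)$.

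First I would check the diagonal of $T^{\mu\nu}$ in the rest frame. By the rest-frame formula \eqref{efacil} together with item 5 of Lemma \ref{restmoments},
\[
T^{00} = \int_{\R^3} q^0 J\,d\q = \int_{\R^3}\sqrt{1+|\q|^2}\,J\,d\q = n\,\Psi(\beta) = e_f .
\]
For the spatial-diagonal components, by the isotropy of $J(n,\beta,0;\q)$ we have $T^{11}=T^{22}=T^{33}$, so using item 3 of Lemma \ref{restmoments} and the rest-frame formula \eqref{pfacil},
\[
T^{ii} = \tfrac{1}{3}\!\int_{\R^3}\!|\q|^2 J\,\tfrac{d\q}{q^0} = \frac{n}{\beta} = p_f, \qquad i=1,2,3.
\]
Second, I would dispose of the off-diagonal components. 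By the $\q\mapsto -\q$ symmetry of $J(n,\beta,0;\q)$ (or the calculation in item 2 of Lemma \ref{restmoments}) the integrals defining $T^{0i}$ and $T^{ij}$ with $i\neq j$ vanish.

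Third, I would match this with the RHS in the rest frame. With $u^\mu=(1,0,0,0)$ and $g^{\mu\nu}=\mathrm{diag}(1,-1,-1,-1)$, the right-hand side is diagonal with
\[
-p_f g^{00} + (e_f+p_f)u^0 u^0 = -p_f + (e_f + p_f) = e_f, \qquad
-p_f g^{ii} + 0 = p_f,
\]
and off-diagonal entries zero. Thus the identity holds component-wise in the rest frame.

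Finally, I would invoke tensor covariance: for a generic J\"uttner distribution $J(n,\beta,\u;\q)$ we pick the Lorentz boost $\Lambda$ that carries $u_f$ to $(1,0,0,0)$ and note that $J_\Lambda$ is precisely a rest-frame J\"uttner with the same scalar parameters $n$, $\beta$ (Lemma \ref{lorentz}). Applying the identity already proved to $J_\Lambda$ and transforming back with $\Lambda^{-1}$, the tensorial nature of both sides yields the claim in the original frame. The only potentially delicate point is verifying the Lorentz covariance of $T^{\mu\nu}$, but this follows immediately from \eqref{scalarprod} and the invariance of $d\q/q^0$ recorded in the preceding discussion.
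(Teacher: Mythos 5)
Your proposal is correct and, unlike the paper, actually supplies an argument: the authors state Lemma \ref{lema4} as a ``classical decomposition'' and simply cite Landau--Lifshitz, giving no proof. Your route --- establish that both sides transform as contravariant two-tensors (using \eqref{scalarprod}, the invariance of $d\q/q^0$ and Lemma \ref{lorentz}), reduce to the Lorentz rest frame, verify the diagonal entries from \eqref{efacil}, \eqref{pfacil} and Lemma \ref{restmoments}, and kill the off-diagonal entries by the $\q\mapsto-\q$ symmetry of $J(n,\beta,0;\q)$ --- is the standard textbook derivation and is sound. One point worth making explicit: your argument proves the identity for J\"uttner distributions only, whereas the lemma as printed is phrased for a generic $f$. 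This narrowing is not a defect of your proof but a tacit correction of the statement: for a general anisotropic $f$ the perfect-fluid form is false (a symmetric two-tensor has ten independent components while $-p_f g^{\mu\nu}+(e_f+p_f)u^\mu u^\nu$ carries only five; the missing pieces are the heat-flux and anisotropic-stress terms, which vanish exactly because $J$ is isotropic in its rest frame --- the step where you invoke the reflection symmetry). Since the paper only ever applies the lemma to J\"uttner equilibria, your version is precisely what is needed, but it would be worth stating the hypothesis $f=J$ explicitly at the outset rather than only in the final paragraph.
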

Then all the moments of a given $f$ that appear as components of $T^{\mu \nu}$ can be computed once we have the values of $p_f$ and $e_f$.
This can be combined with Lemma \ref{lorentz}, which ensures that it suffices to compute the local energy and pressure in the Lorentz rest frame. These two are given by formulae \eqref{efacil} and \eqref{pfacil}.

For the special case of $f=J$ we can go further as the computations in formulae \eqref{efacil} and \eqref{pfacil} were already carried in Lemma \ref{restmoments}. Then we get
the following result.
\begin{lemma}
\label{average_energy}
The quantities $e_{J}$ and $p_{J}$ are given by
\begin{equation}
\nonumber
e_{J}=n\Psi(\beta), \quad p_{J}= \frac{n}{\beta}.
\end{equation}
Using the standard physical units,
\begin{equation}
\nonumber
e_{J}=c^2 n\Psi(\beta), \quad p_{J}= c^2 \frac{n}{\beta}.
\end{equation}
\end{lemma}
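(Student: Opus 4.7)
The plan is to reduce the computation to the Lorentz rest frame by invoking Lemma \ref{lorentz}, and then apply the already-computed rest-frame moments from Lemma \ref{restmoments}. Since $e_f$ and $p_f$ are defined as scalar contractions of the energy--momentum tensor $T^{\mu\nu}$ with the four-velocity $u_f$ and the metric $g_{\mu\nu}$ (see \eqref{temperatura}), and since the J\"uttner distribution is parametrized by $(n,\beta,\u)$ with $\beta$ Lorentz invariant, it suffices to establish the identities for $J(n,\beta,0;\q)$ and invoke invariance.

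First, I would apply a Lorentz boost $\Lambda$ taking $u_J$ to $(1,0,0,0)$; by Lemma \ref{lorentz}, both $n$ and $\beta$ are unchanged, so $J$ transforms into $J(n,\beta,0;\q)$, and $e_J$, $p_J$ are unchanged as well. In this rest frame, formulas \eqref{efacil} and \eqref{pfacil} give the simplified expressions
\begin{equation*}
e_J = \int_{\R^3} \sqrt{1+|\q|^2}\, J(n,\beta,0;\q)\, d\q,
\qquad
p_J = \frac{1}{3} \int_{\R^3} |\q|^2\, J(n,\beta,0;\q)\, \frac{d\q}{q^0}.
\end{equation*}

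Next, I would directly substitute the results from Lemma \ref{restmoments}: part 5 yields $e_J = n\,\Psi(\beta)$, and part 3 yields $p_J = \tfrac{1}{3}\cdot \tfrac{3n}{\beta} = \tfrac{n}{\beta}$. This finishes the dimensionless statement. To obtain the form with standard physical units, I would undo the normalization $m=c=1$ used in the Appendix: since $e_J$ carries units of energy density and $p_J$ of pressure, the reintroduction of $c$ produces an overall factor of $c^2$, matching $e_J = c^2 n\Psi(\beta)$ and $p_J = c^2 n/\beta$.

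There is essentially no obstacle here, as the two non-trivial ingredients --- Lorentz invariance of the scalar quantities and the explicit rest-frame integrals involving $K_1, K_2$ --- are already established. The only mild point worth double-checking is consistency of the $\Psi$ convention \eqref{defpsi} between the rest-frame integral (sum of parts 3 and 4 in Lemma \ref{restmoments}) and the expression used in the Euler system of Section \ref{relEuler}; I would verify that the definition $\Psi(\beta) = 3/\beta + K_1(\beta)/K_2(\beta)$ is used uniformly, which guarantees that $e_J = n\Psi(\beta)$ is the same object appearing in \eqref{Euler2}--\eqref{Euler3}.
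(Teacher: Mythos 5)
Your proposal is correct and follows exactly the route the paper takes: reduce to the Lorentz rest frame via Lemma \ref{lorentz}, apply the rest-frame formulas \eqref{efacil} and \eqref{pfacil}, and read off the values from parts 3 and 5 of Lemma \ref{restmoments}. No discrepancies to report.
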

A direct application of the program sketched above yields then
\begin{lemma}
\label{lema3}
Given any J\"uttner distribution $J$, the following relations hold true:
\begin{enumerate}
\item
$\displaystyle
\int_{\R^3} q^\mu J \frac{d\q}{q^0} = n u^\mu,
$
\item
$\displaystyle
\int_{\R^3}  |\q|^2 J \ \frac{d\q}{q^0} = e_{J} |\u|^2  + p_{J} (3 + |\u|^2) =n \Psi(\beta)|\u|^2 + (3 + |\u|^2)\frac{n}{\beta},
$
\item
$\displaystyle
\int_{\R^3} \sqrt{1 + |\q|^2} J \ d\q = p_{J} |\u|^2 + e_{J} (1 + |\u|^2) = \frac{n}{\beta}|\u|^2 + n \Psi(\beta) (1 + |\u|^2),
$
\item
$\displaystyle
\int_{\R^3} q^i J d\q = (e_{J} +p_{J})\sqrt{1 + |\u|^2} u^i = \left(n \Psi(\beta) + \frac{n}{\beta} \right) \sqrt{1 + |\u|^2} u^i,
$
\item
$\displaystyle
\int_{\R^3} J \frac{d\q}{q^0}  = e_{J} - 3 p_{J}= n \left( \Psi(\beta) - \frac{3}{\beta} \right) = n \frac{K_1(\beta)}{K_2(\beta)}.
$
\end{enumerate}
\end{lemma}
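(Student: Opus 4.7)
The plan is to reduce all five moment identities to contractions of the energy-momentum tensor $T^{\mu\nu}$ of the J\"uttner distribution, and then apply the perfect-fluid decomposition from Lemma \ref{lema4} together with the explicit values of $e_J$ and $p_J$ given in Lemma \ref{average_energy}.

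The first identity is nothing but equation \eqref{nuJ}, which restates the defining relation $N^\mu = n u^\mu$ for the particle-density four-vector of a J\"uttner distribution; no further computation is needed. For identities (3) and (4), observe that $q^0 = \sqrt{1+|\q|^2}$, so the integrals can be rewritten as
\[
\int_{\R^3} \sqrt{1+|\q|^2}\, J\, d\q = T^{00}, \qquad \int_{\R^3} q^i J\, d\q = T^{0i},
\]
and one then reads off their values from Lemma \ref{lema4}: with $u^0 = \sqrt{1+|\u|^2}$ and $g^{00}=1$, $g^{0i}=0$, one gets $T^{00} = -p_J + (e_J+p_J)(1+|\u|^2)$ and $T^{0i}=(e_J+p_J)\sqrt{1+|\u|^2}\, u^i$, which are precisely the claimed formulae.

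For (2), take the spatial trace of the energy-momentum tensor:
\[
\int_{\R^3} |\q|^2 J\, \frac{d\q}{q^0} = \sum_{i=1}^3 \int_{\R^3} q^i q^i J\, \frac{d\q}{q^0} = \sum_{i=1}^3 T^{ii}.
\]
Using $g^{ii}=-1$ and $\sum_i (u^i)^2 = |\u|^2$ in Lemma \ref{lema4} yields $\sum_{i} T^{ii} = 3p_J + (e_J+p_J)|\u|^2 = e_J|\u|^2 + p_J(3+|\u|^2)$. Identity (5) follows from the full trace: since on the mass shell $q_\mu q^\mu = 1$, we have
\[
\int_{\R^3} J\, \frac{d\q}{q^0} = \int_{\R^3} g_{\mu\nu} q^\mu q^\nu J\, \frac{d\q}{q^0} = g_{\mu\nu} T^{\mu\nu},
\]
and the decomposition gives $g_{\mu\nu} T^{\mu\nu} = -4 p_J + (e_J+p_J) u_\mu u^\mu = e_J - 3p_J$, using $u_\mu u^\mu = 1$.

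The only arithmetic left is to substitute $e_J = n\Psi(\beta)$ and $p_J = n/\beta$ from Lemma \ref{average_energy} into each identity, and to note that $\Psi(\beta) - 3/\beta = K_1(\beta)/K_2(\beta)$ by definition \eqref{defpsi}. I don't expect any genuine obstacle here: the whole argument is a bookkeeping exercise once one sees the right contraction, and the main conceptual step is simply recognizing each requested integral as either $T^{\mu\nu}$ contracted against $g_{\mu\nu}$, $\delta^{ij}$, or a specific index selection.
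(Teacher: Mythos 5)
Your proposal is correct and follows essentially the same route as the paper: reduce each integral to a component or trace of $T^{\mu\nu}$, apply the perfect-fluid decomposition of Lemma \ref{lema4}, and substitute $e_J=n\Psi(\beta)$, $p_J=n/\beta$ from Lemma \ref{average_energy}. The only cosmetic difference is that the paper obtains identity (5) by writing $1=(q^0)^2-|\q|^2$ and subtracting the spatial trace from $T^{00}$, which is the same full-trace computation you phrase via $g_{\mu\nu}T^{\mu\nu}$.
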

\begin{proof}
The first point follows from the  definition of $u^\mu$ in terms of $N^\mu$. For the remaining ones,
we just take recourse on Lemma \ref{lema4}. From there we get that
\begin{eqnarray}
\nonumber
\int_{\R^3} q^0
 J \ d\q &=&  T^{00} = - p_{J} + (e_{J}+p_{J})(1+|\u_{J}|^2),
\nonumber
\\
\int_{\R^3} q^i J \ d\q &=& T^{i0} = (e_{J}+p_{J}) u_{J}^i \sqrt{1 +|\u_{J}|^2},
\nonumber
\\
 \int_{\R^3} \frac{|\q|^2}{q^0
} J \ d\q &=& T^{11}+T^{22}+T^{33} = 3 p_{J} + (e_{J}+p_{J}) |\u_{J}|^2.
 \nonumber
\end{eqnarray}
This is to be combined with Lemma \ref{average_energy}. To conclude, we notice that
$$
  \int_{\R^3} J \frac{d\q}{q^0
} = \int_{\R^3} \left((q^0)
^2 - |\q|^2\right) J \frac{d\q}{q^0}
$$
and the last relation follows.
\end{proof}

\subsection{Entropy fluxes}

We can compute also the entropy densities and fluxes of the J\"uttner
 equilibrium, which are used to obtain information in the hydrodynamical limit, thanks to the H-theorem.
\begin{lemma}
\label{sieben}
 The following equalities are verified:
\begin{eqnarray}
 \label{cuarentaycincoa}
\int_{\R^3} \frac{q^i \q \cdot \u }{q^0} J d\q &=&  u^i (p_{J}+(e_{J}+p_{J})|\u|^2),
\\
\int_{\R^3}J \ln(J)\ d\q &=&  n\sqrt{1+|\u|^2}\bigg(\ln \left(
\frac{n}{ M(\beta)}\right)-\beta \Psi(\beta)\bigg),
\label{cuarentaycinco}
\\
\int_{\R^3}\frac {q^i}{q^0}J\ln(J)\ d\q &=& nu^i\bigg( \ln
\left(\frac{n}{M(\beta)}\right)- {\bf \beta }\Psi(\beta) \bigg).
 \nonumber
\end{eqnarray}
\end{lemma}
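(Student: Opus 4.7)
My plan is to handle the three identities uniformly by exploiting the fact that every integrand can be rewritten as a polynomial in $q^\mu$ times either $J\, d\q/q^0$ or $J\, d\q$, so that the integrals become components of the particle four-vector $N^\mu$ or the energy--momentum tensor $T^{\mu\nu}$ of $J$. The values of those tensors are already furnished by Lemma \ref{lema4} together with Lemma \ref{average_energy}, which encode all the non-trivial information about Bessel functions through $e_J=n\Psi(\beta)$ and $p_J=n/\beta$.

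For identity \eqref{cuarentaycincoa} I would write $\q\cdot\u=\u_k q^k$ and pull $\u_k$ out of the integral, so that
\[
\int_{\R^3}\frac{q^i\,\q\cdot\u}{q^0}\,J\,d\q=\u_k\int_{\R^3}\frac{q^iq^k}{q^0}\,J\,d\q=\u_k T^{ik}.
\]
Applying Lemma \ref{lema4} with the spatial signature $g^{ik}=-\delta^{ik}$ gives $T^{ik}=p_J\delta^{ik}+(e_J+p_J)u^iu^k$, and contracting with $\u_k$ and using $\u_k u^k=|\u|^2$ yields exactly $u^i\bigl(p_J+(e_J+p_J)|\u|^2\bigr)$.

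For identities \eqref{cuarentaycinco} and the last one, the crucial observation is that
\[
\ln J=\ln\!\left(\frac{n}{M(\beta)}\right)-\beta\, u_\mu q^\mu,
\]
so both integrals split into a linear combination of moments of $J$. Specifically,
\[
\int_{\R^3}J\ln J\,d\q=\ln\!\left(\frac{n}{M(\beta)}\right)\!\int_{\R^3}J\,d\q-\beta\, u_\mu\!\int_{\R^3}q^\mu J\,d\q,
\]
and similarly for the flux, with an extra factor $q^i/q^0$ inside. The first factor is $n\sqrt{1+|\u|^2}$ by \eqref{rel1}, while $\int q^\mu J\,d\q=T^{\mu 0}$ and $\int (q^iq^\mu/q^0) J\,d\q=T^{i\mu}$; both are evaluated via Lemma \ref{lema4}. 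It then remains to compute the contractions $u_\mu T^{\mu 0}$ and $u_\mu T^{i\mu}$: in each case the $p_J$-term contains a Minkowski inner product $u_\mu u^\nu g^{\mu\nu}$ or $u_\mu g^{i\mu}$ that simplifies using $u_\mu u^\mu=1$, and the $(e_J+p_J)$-term collapses cleanly. The upshot is $u_\mu T^{\mu 0}=e_J\sqrt{1+|\u|^2}=n\Psi(\beta)\sqrt{1+|\u|^2}$ and $u_\mu T^{i\mu}=e_J u^i=n\Psi(\beta)u^i$, after which the two desired formulae are immediate.

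The only conceptual obstacle is keeping the sign conventions for raising and lowering indices straight; once that is done carefully, the proof reduces to arithmetic. No new machinery beyond the moment formulas already proved in the Appendix is required, and in particular the Bessel function identities only enter through the compact quantities $\Psi(\beta)$, $1/\beta$, and $M(\beta)$ that package the data.
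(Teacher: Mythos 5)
Your proposal is correct and follows essentially the same route as the paper: the first identity is the contraction of $T^{ik}$ from Lemma \ref{lema4} with $\u$, and the entropy integrals are handled by expanding $\ln J = \ln(n/M(\beta)) - \beta\, u_\mu q^\mu$ and reducing everything to the moments $N^\mu$, $T^{\mu\nu}$ of the J\"uttner distribution. The only (cosmetic) difference is that you contract $u_\mu T^{\mu 0}$ and $u_\mu T^{i\mu}$ covariantly in one stroke, whereas the paper invokes the individual precomputed items of Lemma \ref{lema3}; both collapse to $e_J u^0$ and $e_J u^i$ respectively and yield the stated formulae.
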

\begin{proof}  Using Lemma \ref{lema4} we get
$$
\int_{\R^3} \frac{q^i \q \cdot \u}{q^0
}J d\q =  u_j (p_{J}
\delta^{ij} +(e_{J}+p_{J})u^i u^j)= u^i(p_{J}+(e_{J}+p_{J})|\u|^2).
$$
To prove (\ref{cuarentaycinco}), note that

\begin{eqnarray}
\int_{\R^3}J\ln(J)\ d\q &=& \ln \left(\frac{n}{M(\beta)}\right)\int_{\R^3}J\ d\q -\beta \sqrt{1+|\u|^2}\int_{\R^3}
\sqrt{1+|\q|^2} \, J \, d\q \nonumber
\\
&+ & \beta \u\cdot
\int_{\R^3}\q J\ d\q \nonumber
\end{eqnarray}
and then using Lemma \ref{lema3}, items {\it 1,3} and {\it 4} we obtain (\ref{cuarentaycinco}).

In the same way
\begin{eqnarray}
\int_{\R^3}\frac{\q}{q^0}J\ln(J)\ d\q &=& \ln \left(\frac{n}{M(\beta)}\right)\int_{\R^3} \frac{\q}{q^0}J d\q -\beta
\sqrt{1+|\u|^2}\int_{\R^3} \q \, J \, d\q \nonumber
\\
&+ & \beta  \int_{\R^3}\frac{\q}{q^0} \q\cdot \u J\ d\q, \nonumber
\end{eqnarray}
and using Lemmas \ref{average_energy} and \ref{lema3}, items {\it 1} and {\it 4}, combined with \eqref{cuarentaycincoa} we arrive to the last identity. This proves the Lemma.
\end{proof}
\subsection{Monotonicity of $K_1/K_2$}\label{mono}
To begin with, let us recall the following recurrence relation:
\begin{equation}
\label{rec}
K_2(\beta)= \frac{2}{\beta} K_1(\beta) + K_0(\beta).
\end{equation}
This can be used to show that
$$
  \frac{K_2(\beta)}{K_1(\beta)} \le \frac{2}{\beta}+1,
$$
as $K_0(\beta) < K_1(\beta)$.
We also note that
\begin{eqnarray}
\label{kedo}
 \left(\frac{K_1(\beta)}{K_2(\beta)}\right)' &=& \frac{3}{\beta} \frac{K_1(\beta)}{K_2(\beta)} + \left(\frac{K_1(\beta)}{K_2(\beta)}\right)^2-1 \\
\nonumber 
 &\ge& \frac{3}{\beta+2}+\frac{\beta^2}{(\beta+2)^2}-1= \frac{2-\beta}{(\beta+2)^2},
\end{eqnarray}
which is strictly positive for $\beta<2$.

Next we analyze the case $\beta \ge 2$. For that we deal with the integral representations of the incomplete Bessel functions. Using the substitution $x=\sinh(s/2)$ we get
$$
K_0(\beta)+K_1(\beta)= \int_0^\infty (1+\cosh(s)) e^{-\beta \cosh(s)}\ ds = 2 \int_0^\infty \frac{2+2x^2}{\sqrt{1+x^2}} e^{-\beta(1+2x^2)}\ dx.
$$
By means of the inequality\footnote{We use that the binomial series is alternate and the fact that when we truncate the series, the error term has the same sign as the first term that is discarded.} 
$$
\frac{1}{\sqrt{1+x^2}}\ge 1-\frac{x^2}{2}\quad \mbox{for}\ x>0
$$ 
we obtain the estimate
$$
K_0(\beta)+K_1(\beta) \ge e^{-\beta} \frac{\sqrt{2 \pi}}{\sqrt{\beta}} \left(1+\frac{1}{8 \beta}-\frac{3}{32 \beta^2} \right).
$$
Arguing in a similar way but using this time the inequality
$$
\frac{1}{\sqrt{1+x^2}} \le 1-\frac{x^2}{2} + \frac{3}{8}x^4 \quad \mbox{for}\ x>0
$$
we arrive to 
$$
K_0(\beta) \le e^{-\beta} \frac{\sqrt{2 \pi}}{2\sqrt{\beta}} \left(1-\frac{1}{8 \beta}+\frac{9}{128 \beta^2} \right).
$$
Therefore
$$
\frac{K_0(\beta) + K_1(\beta)}{K_0(\beta)} \ge \frac{256 \beta^2+32 \beta -24}{128 \beta^2 -16 \beta + 9}
$$
and
$$
\frac{K_1(\beta)}{K_0(\beta)} \ge \frac{128 \beta^2+48 \beta -33}{128 \beta^2 -16 \beta + 9}.
$$
Notice that both the numerator and the denominator are positive for $\beta \ge 2$. This estimate can be used in combination with \eqref{rec} to get
$$
\frac{K_2(\beta)}{K_1(\beta)} \le \frac{128 \beta^3+240 \beta^2+105 \beta -66}{128 \beta^3+48 \beta^2-33 \beta}.
$$
We plug this into \eqref{kedo} so that
$$
\left(\frac{K_1(\beta)}{K_2(\beta)} \right)' \ge  3 \frac{128 \beta^2 + 48 \beta-33}{128 \beta^3+240 \beta^2+105 \beta -66} + \frac{(128 \beta^2 +48 \beta -33)^2 \beta^2}{(128 \beta^3+240 \beta^2+105 \beta -66)^2}-1
$$
$$
 = \frac{3 (6656 \beta^4 + 8512 \beta^3 - 4080 \beta^2 -2013 \beta +726)}{(128 \beta^3+240 \beta^2+105 \beta -66)^2}.
$$
Using that $\beta \ge 2$ we conclude with
$$
\left(\frac{K_1(\beta)}{K_2(\beta)} \right)' \ge \frac{3 (6656 \beta^4 + 2419 \beta^3  +726)}{(128 \beta^3+240 \beta^2+105 \beta -66)^2} >0.
$$

%

{\bf ACKNOWLEDGEMENTS}

The authors thank Prof. Bert Janssen for fruitful discussions that helped us to improve the contents of this paper.  
 This work was partially supported by Ministerio de Ciencia e
Innovaci\'on (Spain), project MTM2011-23384.  The first
author was {supported by Hassan II Academy of Sciences and
Technology (Morocco)}. The second author is partially supported by a Juan de la Cierva grant of the spanish MEC.

 \end{document}